\documentclass[sigconf,nonacm]{acmart}

\AtBeginDocument{%
  }

\setcopyright{acmlicensed}
\copyrightyear{2025}
\acmYear{2025}
\acmDOI{XXXXXXX.XXXXXXX}
\acmConference[NANOCOM '25]{The 12th Annual ACM International Conference on Nanoscale Computing and Communication}{Oct. 23--25,
  2025}{Chengdu, China}

\acmISBN{978-1-4503-XXXX-X/2018/06}

\usepackage{amsmath}

\usepackage{amssymb}

\usepackage{acronym}
\usepackage{multirow}
\usepackage{algorithm}
\usepackage{algpseudocode}

\voffset -5mm
\textheight 23.8cm
\textwidth 17.4 cm

\newcommand{\scaleSection}{\vspace*{-0.25cm}}
\newcommand{\scaleSubsection}{\vspace*{-0.25cm}}
\newcommand{\scaleSectionBelow}{\vspace*{-0.1cm}}
\newcommand{\scaleSubsectionBelow}{\vspace*{-0.1cm}}
\newcommand{\scaleProof}{\vspace*{-2.5mm}}
\newcommand{\scaleProposition}{\vspace*{-1.5mm}}

\acrodef{MC}{molecular communication}
\acrodef{ML}{maximum likelihood}
\acrodef{AML}{approximate maximum likelihood}
\acrodef{EM}{electro-magnetic}
\acrodef{MOS}{metal-oxide semi-conductor}
\acrodef{CSK}{concentration shift keying}
\acrodef{RSK}{ratio shift keying}
\acrodef{MSK}{molecule shift keying}
\acrodef{GMoSK}{generalized molecule shift keying}
\acrodef{BCSK}{binary concentration shift keying}
\acrodef{MMSK}{molecule mixture shift keying}
\acrodef{MIMO}{multiple-input multiple-output}
\acrodef{UT}{unscented transform}
\acrodef{RX}{receiver}
\acrodef{TX}{transmitter}
\acrodef{RV}{random variable}
\acrodef{pdf}{probability density function}
\acrodef{ISI}{inter-symbol interference}
\acrodef{SIN}{signal-independent noise}
\acrodef{SDCN}{signal-dependent channel noise}
\acrodef{BER}{bit error rate}
\acrodef{SER}{symbol error rate}
\acrodef{SNR}{signal-to-noise ratio}
\acrodef{kNN}{$k$-nearest-neighbor}
\acrodef{PEP}{pairwise error probability}
\acrodef{MDA}{mixture design algorithm}

\acrodef{SOD}{sensor output domain}
\acrodef{SID}{sensor input domain}

\renewcommand{\a}{\mathbf{a}}
\renewcommand{\b}{\mathbf{b}}
\newcommand{\sensorone}{^\mathrm{TGS800}}
\newcommand{\sensortwo}{^\mathrm{TGS826}}
\renewcommand{\u}{\mathbf{u}}
\newcommand{\x}{\mathbf{x}}

\newcommand{\xbar}{\mathbf{\bar{x}}}
\newcommand{\xhat}{\mathbf{\hat{x}}}
\renewcommand{\H}{\mathbf{H}}
\newcommand{\h}{\mathbf{h}}
\newcommand{\y}{\mathbf{y}}
\newcommand{\z}{\mathbf{z}}
\newcommand{\nbase}{\mathbf{n}^{\mathrm{norm}}}
\newcommand{\ntx}{\mathbf{n}^{\mathrm{TX}}}
\newcommand{\nc}{\mathbf{n}^{\mathrm{C}}}
\newcommand{\nrx}{\mathbf{n}^{\mathrm{RX}}}
\newcommand{\f}[1]{\mathbf{f}\left( #1 \right)}
\newcommand{\fMOS}[1]{\mathbf{f}\MOS\left( #1 \right)}

\newcommand{\g}[1]{\mathbf{g}\left( #1 \right)}
\newcommand{\nuc}{\nu^{\mathrm{C}}}

\newcommand{\E}[1]{\mathrm{E}\left\{ #1 \right\}}
\newcommand{\Cov}[1]{\mathrm{Cov}\left\{ #1 \right\}}

\newcommand{\diagvec}[1]{\mathrm{diag}\left\{ #1 \right\}}
\newcommand{\diagmat}[1]{\mathrm{Diag}\left\{ #1 \right\}}

\newcommand{\symbolalphabet}{\mathcal{X}}
\newcommand{\feasibleset}{\mathcal{X}'}

\newcommand{\nsymbols}{N}
\newcommand{\nspecies}{S}
\newcommand{\nsensors}{R}
\newcommand{\nsigmapoints}{n_{\sigma}}

\newcommand{\I}{\mathbf{I}}
\newcommand{\nullmatrix}{\mathbf{0}}

\newcommand{\muvec}{\boldsymbol{\mu}}
\newcommand{\covmat}{\mathbf{C}}

\newcommand{\muvecapprox}{\boldsymbol{\Tilde{\mu}}}
\newcommand{\covmatapprox}{\mathbf{\Tilde{C}}}

\newcommand{\SIN}{^{\mathrm{SIN}}}

\newcommand{\MOS}{^{\mathrm{MOS}}}
\newcommand{\ML}{^{\mathrm{ML}}}

\newcommand{\deuclidean}{d^{\ell_2}}
\newcommand{\dvahid}{d^{\mathrm{SNR}}}
\newcommand{\dber}{d^{\mathrm{PEP}}}

\newcommand{\s}{\mathbf{s}}

\newcommand{\transpose}{^\mathrm{T}}
\newcommand{\hadamard}{\odot}

\newcommand{\covfunc}{\mathbf{c}}
\newcommand{\alphavec}{\boldsymbol{\alpha}}
\newcommand{\betavec}{\boldsymbol{\beta}}

\newtheoremstyle{gen}
  {\topsep}    %
  {\topsep}    %
  {}  %
  {}           %
  {\bfseries}  %
  {.}          %
  { }          %
  {Proposition (#3, General Case)} %
\theoremstyle{gen}
\newtheorem*{resultgeneral}{Proposition}

\newtheoremstyle{sin}
  {\topsep}    %
  {\topsep}    %
  {}  %
  {}           %
  {\bfseries}  %
  {.}          %
  { }          %
  {Proposition (#3, \ac{SIN})} %
\theoremstyle{sin}
\newtheorem*{resultsin}{Proposition}

\newtheoremstyle{sdcn}
  {\topsep}    %
  {\topsep}    %
  {}  %
  {}           %
  {\bfseries}  %
  {.}          %
  { }          %
  {Proposition (#3, \ac{SDCN})} %
\theoremstyle{sdcn}
\newtheorem*{resultsdcn}{Proposition}

\newtheoremstyle{sinsdcn}
  {\topsep}    %
  {\topsep}    %
  {}  %
  {}           %
  {\bfseries}  %
  {.}          %
  { }          %
  {Proposition (#3, \ac{SIN} and \ac{SDCN})} %
\theoremstyle{sinsdcn}
\newtheorem*{resultsinsdcn}{Proposition}

\begin{document}

\title[Molecule Mixture Detection and Alphabet Design]{Molecule Mixture Detection and Alphabet Design for Non-linear, Cross-reactive Receiver Arrays in MC} %

\author{Bastian Heinlein$^{1,2}$, Kaikai Zhu$^{1}$, Sümeyye Carkit-Yilmaz$^{1}$, Sebastian Lotter$^1$, Helene M. Loos$^{1,3,4}$, \mbox{Andrea Buettner$^{1,4}$,} Yansha Deng$^5$, Robert Schober$^1$, and Vahid Jamali$^2$}
\affiliation{%
\institution{$^1$ Friedrich-Alexander-Universität Erlangen-Nürnberg, Erlangen, Germany}\country{}
}
\affiliation{%
\institution{$^2$ Technical University of Darmstadt, Darmstadt, Germany}\country{}
}
\affiliation{%
\institution{$^3$ Rheinische Friedrich-Wilhelms-Universität Bonn, Bonn, Germany}\country{}
}
\affiliation{%
\institution{$^4$ Fraunhofer Institute for Process Engineering and Packaging IVV, Freising, Germany}\country{}
}
\affiliation{%
\institution{$^5$ Department of Engineering, King’s College London, London, UK}\country{}
}
\email{{bastian.heinlein,kaikai.zhu,suemeyye.yilmaz,sebastian.g.lotter,helene.loos,andrea.buettner,robert.schober}@fau.de}
\email{yansha.deng@kcl.ac.uk, vahid.jamali@rcs.tu-darmstadt.de}

\renewcommand{\shortauthors}{Heinlein et al.}

\begin{abstract}
    {\setlength{\baselineskip}{0pt}
    Air-based \ac{MC} has the potential to be one of the first \ac{MC} systems to be deployed in real-world applications, enabled by existing sensor technologies such as \ac{MOS} sensors. However, commercially available sensors usually exhibit non-linear and cross-reactive behavior, contrary to the idealizing assumptions about linear and perfectly molecule type-specific sensing often made in the \ac{MC} literature. 
    To address this gap, we propose a detector for molecule mixture communication with a general non-linear, cross-reactive \ac{RX} array that performs approximate maximum likelihood detection on the sensor outputs. Additionally, we introduce an algorithm for the design of mixture alphabets that accounts for the \ac{RX} characteristics.
    We evaluate our detector and alphabet design algorithm through simulations that are based on measurements reported for two commercial \ac{MOS} sensors. Our simulations demonstrate that the proposed detector achieves similar symbol error rates as data-driven methods without requiring large numbers of training samples and that the alphabet design algorithm outperforms methods that do not account for the \ac{RX} characteristics. 
    Since the proposed detector and alphabet design algorithm are also applicable to other chemical sensors, they pave the way for reliable \mbox{air-based \ac{MC}}.}
    \vspace*{-3mm}
\end{abstract}

\keywords{\vspace*{-1mm}Air-based Molecular Communication, Molecule Mixtures}

{\setlength{\baselineskip}{8pt}\maketitle}

\acresetall
{
\setlength{\baselineskip}{10pt}
\scaleSection\section{Introduction}\scaleSectionBelow\label{sec:introduction}
\Ac{MC} is a new paradigm employing molecules for information transmission in scenarios where \ac{EM} wave-based communication is not practical such as for interfacing with biological entities via chemical signals or in \textit{\ac{EM} wave-denied} environments \cite{guo:molecular_physical_layer_6g}. Thanks to existing \ac{TX} hardware (e.g., odor delivery devices) and sensor technology (e.g., \ac{MOS} sensors), air-based \ac{MC} in particular has the potential to become practically feasible in the near future. 
However, while most existing work in \ac{MC} relies on the assumption of molecule counting \acp{RX}, i.e., a linear relationship between the response $r$ of the sensing unit\footnote{We use the term \textit{sensing unit} to emphasize that the investigated effects are present in both natural systems, e.g., biological receptors, and synthetic systems, e.g., \ac{MOS} sensors, even though we focus on the latter systems in this work.} and the input concentration $c$ \cite{kuscu:tx_rx_architectures_mc}, both synthetic and natural sensing units usually exhibit \textbf{non-linear behavior}, i.e., $r = f(c)$ with some non-linear function $f(\cdot)$. 
For example, power-law behavior, i.e., $f(c) = a \cdot c^b$, $a,b \in \mathbb{R}$, is ubiquitous and can be observed in both synthetic systems like \ac{MOS} sensors~\cite{madrolle:linear_quadratic_model_quantification_mixture_two_diluted_gases_single_MOS} and in natural systems, e.g., for the sense of smell~\cite{teixeira:perception_fragrance_mixtures}.
Even sensing units that have a linear regime exhibit some non-linear behavior due to saturation at high concentrations \cite{kuscu:tx_rx_architectures_mc}.  
As an additional challenge, sensors and biological receptive structures, e.g., odorant receptors, are often \textbf{cross-reactive}, i.e., their response depends on the concentrations of several molecule types in the environment, albeit with different response intensity for different molecule types \cite{madrolle:linear_quadratic_model_quantification_mixture_two_diluted_gases_single_MOS}. Yet, in the \ac{MC} literature, cross-reactive behavior is often not properly accounted for and instead most \ac{MC} works consider either only one molecule type or perfectly orthogonal sensing units~\cite{jamali:olfaction_inspired_MC}.

In practice, sensing units exhibit both non-linear and cross-reactive behavior at the same time, i.e., the response of a sensor is generally described by a non-linear function $f(c_1, \dots, c_{\nspecies})$ that depends on the concentrations $c_i$, $i \in \{1, \dots, \nspecies \}$, of $\nspecies$ molecule types in the environment. 
Therefore, these two characteristics have to be considered when developing detection algorithms and designing signal constellations for practical \ac{MC} systems. This is of particular relevance for molecule mixture modulation, where the information is encoded onto the concentration of multiple molecule types, promising higher information content per symbol~\cite{jamali:olfaction_inspired_MC} and higher robustness against changes in the environment~\cite{araz:ratio_shift_keying_time_varying_mc_channels} compared to conventional modulation schemes such as \ac{CSK}. 

This paper makes two main contributions for molecule mixture communication with \ac{RX} arrays that exhibit non-linear, cross-reactive behavior\footnote{It is worth noting that molecule mixture communication includes as special cases some well-known simpler schemes such as \ac{CSK}.}: 1) We propose a \textbf{suboptimal low-complexity detector} based on first- and second-order moments applicable to general non-linear, cross-reactive \ac{RX} arrays and molecule mixture communications. Thereby, we fill a gap in the \ac{MC} literature where only relatively simple special cases of non-linear or cross-reactive \acp{RX}~\cite{kim:experimentally_validated_channel_model_for_mc_systems,bhattacharjee:detection_process_macroscopic_airbased_mc_using_PIN_photodiode,wietfeld:evaluation_multi_molecule_molecular_communication_testbed,araz:ratio_shift_keying_time_varying_mc_channels,jamali:olfaction_inspired_MC} have been considered but no general approach to address this challenge has been reported, especially in the context of molecule mixture communication. 2) We further propose a \textbf{mixture alphabet design algorithm} that selects mixtures to achieve good separability of the outputs of the sensing units. This is different from most existing modulation schemes, where usually either 0 or $N_{\mathrm{released}}$ molecules of each molecule type are released, with $N_{\mathrm{released}}$ chosen ad-hoc (see~\cite{kuren:survey_modulation_techniques_MCD} for an overview). 
While \cite{wang:effective_constellation_design_CSK} and \cite{jamali:olfaction_inspired_MC} move beyond these simple constellation designs by respectively optimizing the signal constellation of a higher-order \ac{CSK} modulation and designing a \ac{MMSK} alphabet, both make simplifying assumptions regarding the \ac{RX} characteristics (\cite{wang:effective_constellation_design_CSK} assumes a molecule counting \ac{RX} and \cite{jamali:olfaction_inspired_MC} only optimizes for a specific non-linear, cross-reactive \ac{RX} implementation). In contrast, our alphabet design algorithm is applicable to mixture communication with arbitrary \ac{RX} characteristics. 
Both the proposed detector and the alphabet design algorithm are designed for a general single-sample \ac{RX} with minimal assumptions regarding the \ac{TX}, channel, and \ac{RX}. Besides a general formulation, we provide direct solutions for two special cases, namely, when i) \ac{TX}, channel, and \ac{RX} noise are signal-independent, and ii) when the channel noise is signal-dependent while the \ac{TX} and \ac{RX} noise remain signal-independent.

\textbf{Notation:} Vectors and matrices are denoted by lowercase and uppercase bold letters, respectively. The $i$-th entry of vector $\x$ is denoted by $[\x]_i$ and the $j$-th column of matrix $\covmat$ as $[\covmat]_j$. The transpose of matrix $\covmat$ is denoted by $\covmat\transpose$ whereas its determinant is denoted by $\det\{\covmat\}$. $\diagvec{\x}$ denotes a diagonal matrix having the elements of vector $\x$ on its main diagonal, while $\diagmat{\covmat}$ denotes a column vector containing the diagonal elements of matrix $\covmat$.
Element-wise multiplication of two vectors $\x$ and $\y$ is shown by $\x\hadamard\y$. The \ac{pdf} of a \ac{RV} $x$ conditioned on $y$ is given by $p_{x|y}(x|y)$. The expectation of a \ac{RV} is denoted by $\E{\cdot}$ and the covariance of a random vector by $\Cov{\cdot}$. $\I$ denotes the identity matrix and $\nullmatrix$, depending on the context, either a vector or a matrix containing only zeroes. $\lVert \cdot \rVert_2$ denotes the $\ell_2$-norm. 
Sets are shown by calligraphic letters and the cardinality of set $\mathcal{A}$ is denoted by $|\mathcal{A}|$. $\mathcal{U}(\mathcal{A})$ denotes a \ac{RV} that is uniformly distributed over a \mbox{set $\mathcal{A}$}. 

\scaleSection\section{System Model}\scaleSectionBelow\label{sec:system_model}
In the following, we first introduce the system model in its general form in Section~\ref{sec:system_model:overview} before describing its relevant special cases in Section~\ref{sec:system_model:special_cases}.

\scaleSubsection\subsection{General Description}\scaleSubsectionBelow\label{sec:system_model:overview}
We consider a general system model that incorporates the three major noise sources reported in the experimental \ac{MC} literature, namely \ac{TX} noise, i.e., uncertainty about the amount of released molecules \cite{farsad:tabletop_mc_text_messages_through_chemical_signals,kim:experimentally_validated_channel_model_for_mc_systems}, channel noise, i.e., uncertainty due to the propagation of the molecules and potential interference \cite{kim:experimentally_validated_channel_model_for_mc_systems}, and \ac{RX} noise, i.e., noise introduced by the employed sensing units at the \ac{RX} \cite{shin:low_frequency_noise_gas_sensors,farsad:tabletop_mc_text_messages_through_chemical_signals}. 
We consider here a single-sample detector to focus on the effect of the \ac{RX} array characteristics instead of the intricate temporal dynamics that proved challenging to model even for special cases (see, e.g., \cite{khaloopour:experimental_platform_macroscale_fluidic,alzubi:macroscale_MC_iot_based_pipeline_inspection}), let alone for a general setting.
After the simplification of neglecting the effect of \ac{ISI} to focus on the effect of the \ac{RX} array characteristics, we arrive at the following system model\footnote{A common approach in the \ac{MC} literature is to assume sufficiently long symbol intervals so that \ac{ISI} can be neglected \cite{kilinc:rx_design_for_mc}. As the information content per symbol can be increased by employing molecule mixture modulation as compared to schemes like \ac{CSK}, longer symbol intervals can be chosen to achieve a desired data rate, thus making this assumption less limiting for molecule mixture communications. In addition, methods like equalization can be employed to mitigate \ac{ISI}~\cite{kilinc:rx_design_for_mc}.}, whose components we will define in the next paragraphs:
\begin{subequations}
    \begin{align}
        \x &= \xbar +\ntx(\xbar) \label{eq:system_model:tx_noise} \\
        \y &= \H \x + \nc(\x) \label{eq:system_model:channel_noise}\\
        \z &= \f{\y} + \nrx(\f\y).\label{eq:system_model:rx_noise}
    \end{align}
    \end{subequations}
Equs.~\eqref{eq:system_model:tx_noise}, \eqref{eq:system_model:channel_noise}, and \eqref{eq:system_model:rx_noise} describe respectively the noisy release process of the signaling molecules, their propagation through a noisy channel, and how the molecule concentrations at the \ac{RX} are converted to the eventual measurements that can be used for symbol detection. 

In \eqref{eq:system_model:tx_noise}, $\x \in \mathbb{R}_{\geq0}^\nspecies$, where $\mathbb{R}_{\geq0}$ denotes the set of non-negative real numbers, is the vector of the concentrations of all $\nspecies$ molecule types immediately after the molecules have been released from the \ac{TX}. For an ideal \ac{TX}, $\x=\xbar$ holds, where $\xbar$ collects the desired molecule concentrations of a symbol after \textbf{molecule release}. However, practical \acp{TX} cannot perfectly control the amounts of molecules they release~\cite{farsad:tabletop_mc_text_messages_through_chemical_signals}. We model this by additive noise $\ntx(\xbar)$, which can in general be symbol-dependent, and follows the conditional \ac{pdf} $p_{\ntx| \xbar}(\ntx|\xbar)$. 
In practice, the amount of released molecules cannot be arbitrary, but must consider the operating regime of the \ac{RX}, limited molecule storage, and hardware constraints limiting the amounts of molecules of each type that can be released at once. Therefore, we require that all symbols $\xbar \in \symbolalphabet$ in the mixture alphabet $\symbolalphabet$ lie in a \textit{feasible set} $\feasibleset \subseteq \mathbb{R}_{\geq0}^\nspecies$.

Eq.~\eqref{eq:system_model:channel_noise} describes the molecule concentrations $\y \in \mathbb{R}_{\geq0}^\nspecies$ at the passive \ac{RX} after \textbf{propagation through the noisy channel}. Depending on the environmental conditions, such as airflow and the distance between \ac{TX} and \ac{RX}, $\y$ may be subject to attenuation in the channel, captured by diagonal channel matrix $\H = \diagvec{\h}$, where $[\h]_i$ denotes the attenuation factor for molecules of type $i$. The channel noise $\nc(\x)$ can in principle be signal-dependent and is distributed according to a conditional \ac{pdf} $p_{\nc|\x}(\nc|\x)$. 

Finally, \eqref{eq:system_model:rx_noise} describes how the molecule concentrations $\y$ are converted to the \textbf{response of the sensing units} $\z \in \mathbb{R}^\nsensors$, where $[\z]_r$, $r \in \{1, \dots, \nsensors\}$, denotes the output of the $r$-th sensing unit and $\nsensors$ the total number of sensing units at the \ac{RX}. The mapping from $\y$ to $\z$ follows in general a non-linear function $\f\cdot$, whose output depends on multiple molecule concentrations simultaneously. 
In practice, the sensing units also introduce measurement noise $\nrx(\f\y)$, e.g., thermal noise. For the sake of generality, we again permit signal-dependent measurement noise and characterize it by conditional \ac{pdf} $p_{\nrx|\f\y}(\nrx|\f\y)$. 

In summary, we consider a very general system model that accounts for the three relevant noise sources (\ac{TX}, channel, and \ac{RX} noise) reported in experimental \ac{MC} literature and a possibly non-linear sensor function. 

\scaleSubsection\subsection{Relevant Special Cases}\scaleSubsectionBelow\label{sec:system_model:special_cases}
Next, we specialize the system model proposed in Section~\ref{sec:system_model:overview} to relevant special cases that are commonly encountered in \ac{MC}. 

\subsubsection{Signal-independent Gaussian Noise}\label{sec:system_model:special_cases:SIN}
The first special case we consider is that all three noise sources are signal-independent and Gaussian. Signal-independent \ac{TX} noise, characterized by mean $\muvec_{\ntx}\SIN$ and covariance $\covmat_{\ntx}\SIN$, might occur, e.g., due to small errors when controlling the flow rate of an odor delivery device \cite{hopper:multichannel_portable_odor_delivery_device} and signal-independent \ac{RX} noise, characterized by mean $\muvec_{\nrx}\SIN$ and covariance $\covmat_{\nrx}\SIN$, might be due to thermal noise in the electronic components \cite{shin:low_frequency_noise_gas_sensors}. For high molecule concentrations and controlled propagation environments, the observed number of molecules at the \ac{RX} converges to its expectation (see~\cite[Sec. 2.2.2.4]{jamali:thesis}).
The channel noise, characterized by mean $\muvec_{\nc}\SIN$ and covariance $\covmat_{\nc}\SIN$, might be used to represent the uncertainty about the amount of background molecules in the environment, e.g., due to unknown interference which is independent of the currently transmitted signal. 
We consider Gaussian noise for simplicity since it has been shown to be an accurate approximation in air-based \ac{MC} \cite{kim:experimentally_validated_channel_model_for_mc_systems,mcguiness:experimental_results_openair_transmission_macromolecular_communication} and since Gaussian \acp{RV} are readily parametrized by their mean and covariance. Yet, the proposed detector and mixture alphabet design algorithm are also applicable to other noise distributions, although possibly with some performance degradation since they exploit only mean and covariance. %

\subsubsection{Signal-dependent Channel Noise}\label{sec:system_model:special_cases:SDCN}
In addition to the case of all-\ac{SIN}, we consider the more challenging scenario of \ac{SDCN}, while the \ac{TX} and \ac{RX} noise remain signal-independent and Gaussian (i.e., they are described by $\muvec_{\ntx}\SIN$ and $\covmat_{\ntx}\SIN$, $\muvec_{\nrx}\SIN$ and $\covmat_{\nrx}\SIN$ as in Section~\ref{sec:system_model:special_cases:SIN}). Signal-dependent channel noise $\nc(\x)$, which occurs, e.g., in Poisson channels, is commonly assumed in \ac{MC} \cite{jamali:olfaction_inspired_MC} and often approximated as
\vspace*{-2.5mm}\begin{equation}\label{eq:system_model:SDCN:noise_definition}
    \nc(\x) = \sqrt{\nuc\H\x} \hadamard \nbase,
\vspace*{-1.5mm}\end{equation}%
where square root $\sqrt{\cdot}$ is applied element-wise and $\nbase$ is normal distributed with zero-mean and covariance matrix $\I$. We introduce a scaling factor $\nuc$ to be able to control the noise power introduced by the channel in our simulations, with $\nuc=1$ corresponding to the Gaussian approximation of the Poisson channel. 

\subsubsection{MOS Sensors as Non-Linear, Cross-Reactive RX Arrays}\label{sec:system_model:special_cases:MOS}
Finally, we consider an array of two \ac{MOS} sensors at the \ac{RX} as an example of a non-linear sensing unit. As already discussed in Section~\ref{sec:introduction}, \ac{MOS} sensors exhibit power-law behavior, i.e., their conductance for a concentration $c$ of a target gas can be described by $G = a c^b$, where $a,b \in \mathbb{R}$ are constants specific to the considered combination of sensor and target gas \cite{madrolle:linear_quadratic_model_quantification_mixture_two_diluted_gases_single_MOS}.
While the response of \ac{MOS} sensors to gas mixtures is less explored, some models also have been proposed to describe the response to mixtures of two gases (see, e.g., \cite{madrolle:linear_quadratic_model_quantification_mixture_two_diluted_gases_single_MOS} for a comparison of four existing models).  
Here, we focus on the model originally introduced in \cite{llobet:steadystate_transient_behavior_of_thickfilm_tin_oxide_sensors_gas_mixtures} since it is well-established in the literature, can be readily extended to more than two gases (see~\cite{llobet:steadystate_transient_behavior_of_thickfilm_tin_oxide_sensors_gas_mixtures}), and showed a good fit to measurement data in \cite{madrolle:linear_quadratic_model_quantification_mixture_two_diluted_gases_single_MOS}. The model is given as 
\vspace*{-1.5mm}\begin{equation}
    G(c_1, c_2) = a_1 c_1^{b_1} + a_2 c_2^{b_2} + a_3 c_1^{b_1} c_2^{b_2} + a_4,\vspace*{-0.5mm}
\end{equation}
where $a_1, a_2, a_3, a_4, b_1, b_2$ are constants that are collected in vectors $\a = [a_1, a_2, a_3, a_4]$ and $\b = [b_1, b_2]$, and $c_1$ and $c_2$ denote the concentrations of the respective target gases. 
In the notation of our system model, the function describing the \ac{RX} array is thus given by
\vspace*{-2mm}\begin{equation}\label{eq:system_model:mos}
    \fMOS{y_1, y_2} = \begin{bmatrix} f^1(y_1, y_2) \\ f^2(y_1, y_2) \end{bmatrix} = \begin{bmatrix}
        a_1^1 y_1^{b_1} + a_2^1 y_2^{b_2^1} + a_3^1 y_1^{b_1^1} y_2^{b_2^1} + a_4^1 \\
        a_1^2 y_1^{b_1^2} + a_2^2 y_2^{b_2^2} + a_3^2 y_1^{b_1^2} y_2^{b_2^2} + a_4^2 \\
    \end{bmatrix},
\end{equation}
where $f^r(y_1, y_2)$, $a_q^r$, and $b_s^r$ denote the output of the $r$-th sensor, the coefficients of the $r$-th sensor with $q \in \{1, \dots, 4\}$, and the power-law associated with the $r$-th sensor and molecules of type $s$, respectively. 
We emphasize again that our detector and mixture alphabet design algorithm are applicable for more than two molecule types and for arbitrary functions $\f\cdot$. 

\scaleSection\section{Mixture Detection Using First- and Second-Order Moments}\scaleSectionBelow\label{sec:detection}
In this section, we present our first main contribution, namely an \ac{AML} detector based on first- and second-order moments. To this end, we provide an overview over the proposed detector, briefly review the \ac{UT}, which is essential to deal with the non-linearity introduced by the sensing units, and derive the approximate likelihoods of the sensing unit responses for each symbol.

\scaleSubsection\subsection{Approximate ML Detection}\scaleSubsectionBelow\label{sec:detection:overview}
Since the proposed detector is inspired by \ac{ML} detection, we briefly review the optimal \ac{ML} detector before outlining the proposed suboptimal \ac{AML} detector. 
In \textbf{\ac{ML} detection}, one solves the \textit{forward problem}, i.e., determining the likelihoods of each symbol and then using them to estimate which symbol $\xbar \in \symbolalphabet$ has been transmitted. For our system model, \ac{ML} detection would be performed according to
\vspace*{-1mm}\begin{equation}
    \xhat\ML = \arg\max_{\xbar \in \symbolalphabet} p_{\z|\xbar}(\z|\xbar),
\vspace*{-2mm}\end{equation}
where $p_{\z|\xbar}(\z|\xbar)$ denotes the \ac{pdf} of $\z$ in the \textit{sensing unit output domain} for transmitted symbol $\xbar$. \ac{ML} detection not only yields optimal performance in terms of \ac{SER} for equi-probable symbols $\xbar \in \symbolalphabet$ but also does not require inversion of $\f\cdot$, which may not be invertible in the first place. 
Unfortunately, obtaining the likelihoods of the sensing unit responses is tedious in practice for non-linear, cross-reactive \ac{RX} arrays. 

Thus, we instead propose an approach for \textbf{\ac{AML} detection} based on the approximate first- and second-order moments of the likelihoods of the sensing unit output for each symbol. In contrast to \ac{ML} detection, we do not determine the exact \ac{pdf} $p_{\z|\xbar}(\z|\xbar)$ but use the corresponding estimated moments $\muvecapprox_{\mathbf{z}|\xbar}$ and $\covmatapprox_{\mathbf{z}|\xbar}$ to parametrize a \ac{pdf} $p^{\mathrm{A}}_{\z|\xbar}(\z|\xbar)$ that approximates $p_{\z|\xbar}(\z|\xbar)$ in the sensing unit output domain, so that \ac{AML} detection can be performed \mbox{according to}:
\vspace*{-2.5mm}\begin{equation}
    \xhat = \arg\max_{\xbar \in \symbolalphabet} p^{\mathrm{A}}_{\z|\xbar}(\z|\xbar).\vspace*{-1mm}
\end{equation}
We focus on first- and second-order moments because they are easily propagated through linear systems and the \ac{UT} provides an efficient way to approximately propagate them through non-linear functions (see Section~\ref{sec:detection:ut}). We use Gaussian distributions, i.e., $p^{\mathrm{A}}_{\z|\xbar}(\z) = \exp[-\frac{1}{2}(\z-\muvecapprox_{\z|\xbar})\transpose\covmatapprox_{\mathbf{z}|\xbar}^{-1}(\z-\muvecapprox_{\z|\xbar})] [(2\pi)^\nsensors \det\{\covmatapprox_{\mathbf{z}|\xbar}\}]^{-1/2}$, since these are readily parametrized by first- and second-order moments and fit well to experiments \cite{mcguiness:experimental_results_openair_transmission_macromolecular_communication}.

It is worth pointing out that our proposed detector is the first one to be applicable to \acp{RX} with arbitrary non-linear, cross-reactive characteristics and requires only knowledge of channel matrix $\H$, the conditional distributions of the \ac{TX}, channel, and \ac{RX} noise, and a way to obtain $\f\y$ for a given $\y$. For a given mixture alphabet $\symbolalphabet$, it is also efficient in terms of computational complexity: During the \textbf{online phase}, i.e., for estimating the symbols, it is only necessary to evaluate the approximate likelihood of each of the $|\symbolalphabet|$ symbols for a given observation $\z$. For the choice of Gaussian \acp{pdf}, this reduces to computing matrix-vector multiplications. During the \textbf{offline phase}, i.e., for obtaining the approximate likelihoods, it is in general necessary to integrate over \acp{pdf}, but we will show that this can be avoided for relevant special cases. 

\scaleSubsection\subsection{Background on the Unscented Transform}\scaleSubsectionBelow\label{sec:detection:ut}
As already briefly mentioned, we employ the \ac{UT} to estimate the mean and covariance of an $\nsensors$-dimensional random vector $\z$ which is obtained by feeding an $\nspecies$-dimensional random vector $\y$ with known mean and covariance through a non-linear function $\f\cdot$, i.e., $\z = \f\y$. 
The main idea behind the \ac{UT}, as originally introduced for state estimation \cite{julier:unscented_filtering_nonlinear_estimation}, is relatively simple: One chooses $\nsigmapoints$ carefully selected points $\s_i$, $i=1, \dots, \nsigmapoints$, so-called \textbf{sigma points} which match the mean $\muvec_\y$ and covariance $\covmat_\y$ of $\y$. These points $\s_i$ are then fed through $\f\cdot$ to obtain $\f{\s_i}$. Then, the mean $\muvec_\z$ and covariance $\covmat_\z$ of $\z$ are estimated respectively as $\muvecapprox_\z = \frac{1}{\nsigmapoints}\sum_{i=1}^{\nsigmapoints} \f{\s_i}$ and $\covmatapprox_{\z} = \frac{1}{\nsigmapoints}\sum_{i=1}^{\nsigmapoints} \left( \f{\s_i}-\muvecapprox_\z \right)\left( \f{\s_i}-\muvecapprox_\z \right)\transpose$.
We follow the approach in \cite{julier:unscented_filtering_nonlinear_estimation} and select the $\nsigmapoints = 2\nspecies$ sigma points  according to $\s_j = \muvec_y + [ (\nspecies \covmat_\y )^{1/2} ]_j$ and $\s_{j+\nspecies} = \muvec_y - [ (\nspecies \covmat_\y)^{1/2} ]_j$, where $j = 1, \dots, \nspecies$, and $\left(\nspecies \covmat_\y\right)^{1/2}$ is obtained using the Cholesky decomposition of $\nspecies \covmat_\y$.

\scaleSubsection\subsection{Derivation of the Moments}\scaleSubsectionBelow\label{sec:detection:moment_derivation}
In this section, we derive the conditional moments that eventually parametrize $p^{\mathrm{A}}_{\z|\xbar}(\z|\xbar)$ for both our general system model and the two special cases of \ac{SIN} and \ac{SDCN}. To this end, we propagate the moments step by step through the system model described by \eqref{eq:system_model:tx_noise}-\eqref{eq:system_model:rx_noise}.

\subsubsection{Moments of $\x$}\label{sec:detection:moments:x}
\scaleProposition
\begin{resultgeneral}[$\x$]
The mean and covariance of $\x$ for a given symbol $\xbar$ can in general be expressed as

\vspace*{-3mm}
\begin{subequations}
        \begin{equation}
            \muvec_{\x|\xbar} = \xbar + \int \ntx p_{\ntx|\xbar}(\ntx|\xbar) \mathrm{d}\ntx
        \end{equation}
        \begin{equation}
            \covmat_{\x|\xbar} = \int \covfunc(\ntx,\muvec_{\ntx|\xbar})p_{\ntx|\xbar}(\ntx|\xbar) \mathrm{d}\ntx,
        \end{equation}
    \end{subequations}

    \vspace*{-1.5mm}
    where $\covfunc(\alphavec, \betavec) = \left(\alphavec - \betavec \right)\left(\alphavec - \betavec \right)\transpose$ with random vector $\alphavec$ and vector $\betavec$, and $\muvec_{\ntx|\xbar}$ is the conditional mean of $\ntx$ for a given $\xbar$.
\end{resultgeneral}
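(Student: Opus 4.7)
The plan is to apply the definitions of conditional expectation and covariance directly to the transmitter equation $\x = \xbar + \ntx(\xbar)$ from \eqref{eq:system_model:tx_noise}, treating $\xbar$ as a deterministic quantity given the transmitted symbol while the only randomness comes from $\ntx$, whose conditional law $p_{\ntx|\xbar}(\ntx|\xbar)$ is assumed known.

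First, I would derive the mean. By linearity of expectation applied to $\x = \xbar + \ntx(\xbar)$, conditioned on $\xbar$, I obtain $\muvec_{\x|\xbar} = \E{\x \mid \xbar} = \xbar + \E{\ntx \mid \xbar}$. Writing the inner conditional expectation as an integral of $\ntx$ against its conditional pdf, i.e., $\E{\ntx \mid \xbar} = \int \ntx \, p_{\ntx|\xbar}(\ntx|\xbar)\, \mathrm{d}\ntx$, and substituting this into the previous equation gives precisely the stated expression for $\muvec_{\x|\xbar}$. I would also note that $\muvec_{\ntx|\xbar}$, referenced in the covariance part of the proposition, is exactly this integral, so that $\muvec_{\x|\xbar} = \xbar + \muvec_{\ntx|\xbar}$.

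Next, I would derive the covariance. The key simplification is that $\x - \muvec_{\x|\xbar} = (\xbar + \ntx) - (\xbar + \muvec_{\ntx|\xbar}) = \ntx - \muvec_{\ntx|\xbar}$, so the deterministic shift $\xbar$ drops out and $\covmat_{\x|\xbar} = \Cov{\x \mid \xbar} = \Cov{\ntx \mid \xbar}$. Writing the conditional covariance in its integral form against $p_{\ntx|\xbar}$ yields $\int (\ntx - \muvec_{\ntx|\xbar})(\ntx - \muvec_{\ntx|\xbar})\transpose p_{\ntx|\xbar}(\ntx|\xbar) \, \mathrm{d}\ntx$, and identifying the matrix-valued integrand with the auxiliary map $\covfunc(\alphavec, \betavec) = (\alphavec - \betavec)(\alphavec - \betavec)\transpose$ evaluated at $\alphavec = \ntx$ and $\betavec = \muvec_{\ntx|\xbar}$ recovers the claimed formula.

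The proof is essentially a bookkeeping exercise, so there is no substantive obstacle. The only care required is notational: one must distinguish the random vector $\ntx$ (used when taking expectations) from its integration variable inside the integrals, and must be explicit that the conditioning on $\xbar$ reduces the problem to computing the first two moments of $\ntx$ under its conditional distribution. No assumptions beyond the existence of the first two moments of $p_{\ntx|\xbar}(\cdot|\xbar)$ are needed, which matches the generality claimed by the proposition.
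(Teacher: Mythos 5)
Your proposal is correct and follows essentially the same route as the paper: apply linearity of (conditional) expectation to \eqref{eq:system_model:tx_noise}, express the resulting conditional moments of $\ntx$ as integrals against $p_{\ntx|\xbar}(\ntx|\xbar)$, and observe that the deterministic shift $\xbar$ contributes nothing to the covariance. The only cosmetic difference is that the paper invokes additivity of the covariance operator ($\Cov{\xbar|\xbar}=\nullmatrix$) where you cancel the shift explicitly inside the quadratic form; both are equivalent.
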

\begin{proof}
    For $\muvec_{\x|\xbar}$, we apply the expectation operator to~\eqref{eq:system_model:tx_noise} and then insert the definition of the first-order moment ($\E{x} = \int x p_{x}(x)\mathrm{d}x$ for some \ac{RV} x) into
    {\allowdisplaybreaks\begin{align*}
        \muvec_{\x|\xbar} &= \E{\x|\xbar} = \E{\xbar + \ntx(\xbar)|\xbar} \\
        &= \E{\xbar|\xbar} + \E{\ntx(\xbar)|\xbar} = \xbar + \muvec_{\ntx|\xbar}.
    \end{align*}}
    Analogously, we proceed for the covariance:
    \begin{align*}
        \covmat_{\x|\xbar} &= \Cov{\x|\xbar} = \Cov{\xbar + \ntx|\xbar} \\
        &= \Cov{\xbar|\xbar} + \Cov{\ntx|\xbar} \\
        &= \nullmatrix + \covmat_{\ntx|\xbar} =  \covmat_{\ntx|\xbar}.\qedhere\vspace*{-3mm}
    \end{align*}%
\end{proof}
\scaleProposition
\begin{resultsinsdcn}[$\x$]
    For both \ac{SIN} and \ac{SDCN}, the \ac{TX} noise is signal-independent and thus we have
    \begin{subequations}
        \begin{equation}
            \muvec_{\x|\xbar} = \xbar + \muvec_{\ntx}\SIN 
        \end{equation}
        \begin{equation}
            \covmat_{\x|\xbar} = \covmat_{\ntx}\SIN.
        \end{equation}
    \end{subequations}
\end{resultsinsdcn}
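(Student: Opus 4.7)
The plan is to obtain the \ac{SIN}/\ac{SDCN} result as an immediate corollary of the general-case proposition for $\x$ just proved. The only extra ingredient needed is the observation that the signal independence of the \ac{TX} noise, which holds by definition in both the \ac{SIN} and \ac{SDCN} scenarios (as declared at the beginning of Section~\ref{sec:system_model:special_cases:SIN} and reiterated in Section~\ref{sec:system_model:special_cases:SDCN}), collapses the conditional moments of $\ntx$ given $\xbar$ to the unconditional ones.

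Concretely, I would first remark that because $\ntx$ is independent of $\xbar$ in both special cases, the conditional \ac{pdf} factors as $p_{\ntx|\xbar}(\ntx|\xbar) = p_{\ntx}(\ntx)$. Consequently, the integrals $\int \ntx\, p_{\ntx|\xbar}(\ntx|\xbar)\, \mathrm{d}\ntx$ and $\int \covfunc(\ntx,\muvec_{\ntx|\xbar})\, p_{\ntx|\xbar}(\ntx|\xbar)\, \mathrm{d}\ntx$ appearing in the general proposition reduce to the unconditional mean and covariance of $\ntx$. By the parametrization of \ac{SIN}-type \ac{TX} noise fixed in Section~\ref{sec:system_model:special_cases:SIN}, these unconditional moments are precisely $\muvec_{\ntx}\SIN$ and $\covmat_{\ntx}\SIN$. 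Substituting them into the general expressions $\muvec_{\x|\xbar} = \xbar + \muvec_{\ntx|\xbar}$ and $\covmat_{\x|\xbar} = \covmat_{\ntx|\xbar}$ then directly yields both identities in the claim.

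Since the statement is merely a specialization of an already-proved general result, there is no real obstacle. The only point worth making explicit is the reduction $\muvec_{\ntx|\xbar} = \muvec_{\ntx}\SIN$ and $\covmat_{\ntx|\xbar} = \covmat_{\ntx}\SIN$ under signal independence; once recorded, the remainder is a one-line substitution into the general-case formulas.
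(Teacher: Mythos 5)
Your proposal is correct and matches the paper's argument in substance: the paper likewise proves this by applying $\E{\cdot}$ and $\Cov{\cdot}$ to \eqref{eq:system_model:tx_noise} and invoking the signal independence of the \ac{TX} noise to replace the conditional moments by $\muvec_{\ntx}\SIN$ and $\covmat_{\ntx}\SIN$. Routing the specialization through the general-case proposition rather than re-deriving from the system model is an immaterial difference.
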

\scaleProof\begin{proof}
    Apply $\E{\cdot}$ and $\Cov\cdot$ to \eqref{eq:system_model:tx_noise} and then exploit that the \ac{TX} noise is symbol-independent and parametrized by $\muvec_{\ntx}\SIN$ and $\covmat_{\ntx}\SIN$ (see Section~\ref{sec:system_model:special_cases:SIN}).
\end{proof}

\vspace*{-4mm}
\subsubsection{Moments of $\y$}\label{sec:detection:moments:y}
Using the results from Section~\ref{sec:detection:moments:x}, we now derive the moments of $\y$ for a given symbol $\xbar$. The channel noise may depend on the specific realization of $\x$ which is drawn from a distribution itself, rendering the computations more complex than the ones required to obtain $\muvec_{\x|\xbar}$ and $\covmat_{\x|\xbar}$.
\vspace*{-0.5mm}\begin{resultgeneral}[$\y$]
    The mean and covariance of $\y$ for a given symbol $\xbar$ can be expressed as 
    \begin{subequations}\label{eq:detector:moments_y:main}
        \begin{equation}\label{eq:detector:moments_y:main:mean}
            \muvec_{\y|\xbar} = \H \muvec_{\x|\xbar} + \int \nc \int p_{\nc|\x}(\nc|\x) p_{\x|\xbar}(\x|\xbar) \mathrm{d}\x \mathrm{d}\nc
        \end{equation}
        \begin{equation}\label{eq:detector:moments_y:main:covariance}
            \covmat_{\y|\xbar} = \H\covmat_{\x|\xbar}\H\transpose + \int \covfunc(\nc,\muvec_{\y|\xbar})  \int p_{\nc|\x}(\nc|\x) p_{\x|\xbar}(\x|\xbar) \mathrm{d}\x \mathrm{d}\nc.
        \end{equation}
    \end{subequations}
\end{resultgeneral}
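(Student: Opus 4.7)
The plan is to mirror the proof of the previous proposition for the moments of $\x$: apply the expectation and covariance operators to the channel equation~\eqref{eq:system_model:channel_noise}, exploit that $\H$ is deterministic, and marginalize the signal-dependent channel noise $\nc(\x)$ over $\x$ via the tower property. Once $\x$ has been integrated out, the expressions for $\muvec_{\x|\xbar}$ and $\covmat_{\x|\xbar}$ derived in Section~\ref{sec:detection:moments:x} can be inserted as a black box.

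For the mean, I would write $\muvec_{\y|\xbar} = \E{\H\x + \nc(\x)|\xbar}$ and split by linearity. The first term becomes $\H\muvec_{\x|\xbar}$ since $\H$ factors out of the expectation. For the second term, I apply the tower property by conditioning first on $\x$ and then averaging over $\x$ given $\xbar$, producing the iterated integral $\int \nc \int p_{\nc|\x}(\nc|\x) p_{\x|\xbar}(\x|\xbar) \mathrm{d}\x\,\mathrm{d}\nc$ stated in~\eqref{eq:detector:moments_y:main:mean}.

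For the covariance, I would start from $\covmat_{\y|\xbar} = \E{(\y - \muvec_{\y|\xbar})(\y - \muvec_{\y|\xbar})\transpose | \xbar}$, substitute the centered decomposition $\y - \muvec_{\y|\xbar} = \H(\x - \muvec_{\x|\xbar}) + (\nc - \muvec_{\nc|\xbar})$, and expand the outer product. The quadratic term in $\x$ collapses to $\H\covmat_{\x|\xbar}\H\transpose$ via the previous proposition, while the quadratic term in $\nc$ becomes an iterated integral over the joint density $p_{\nc|\x}(\nc|\x)\,p_{\x|\xbar}(\x|\xbar)$, matching the second summand of~\eqref{eq:detector:moments_y:main:covariance}. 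The two cross terms coupling $\x - \muvec_{\x|\xbar}$ with $\nc - \muvec_{\nc|\xbar}$ must be shown to vanish; I would discharge them by invoking a zero-mean property of the additive channel noise conditional on $\x$, which is implicit in the additive decomposition of~\eqref{eq:system_model:channel_noise}, so that the conditional cross-covariance of $\x$ and $\nc$ is zero after integration.

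The main obstacle I anticipate is precisely the handling of these cross terms, since the signal-dependence of $\nc(\x)$ rules out a direct appeal to independence between $\x$ and $\nc$; the argument therefore rests on the noise having the ``right'' conditional mean structure. A secondary point I would want to verify is the offset appearing inside $\covfunc(\nc, \cdot)$ in~\eqref{eq:detector:moments_y:main:covariance}: with the decomposition above, the natural centering of $\nc$ is $\muvec_{\nc|\xbar}$, so I would check that the substitution of $\muvec_{\y|\xbar}$ either simplifies to the same quantity after re-arranging the integrals or reflects an intended additional structural assumption on the noise model.
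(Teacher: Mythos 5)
Your proposal follows essentially the same route as the paper: apply linearity of the expectation to~\eqref{eq:system_model:channel_noise}, then marginalize the signal-dependent noise over $\x$ using the conditional independence of $\nc$ and $\xbar$ given $\x$ (the paper phrases your ``tower property'' step as rewriting $p_{\x,\nc|\xbar}$ as $p_{\nc|\x}\,p_{\x|\xbar}$), and the mean derivation matches the paper's line for line. Where you go further is in the covariance: the paper simply asserts $\Cov{\H\x+\nc(\x)|\xbar} = \H\covmat_{\x|\xbar}\H\transpose + \Cov{\nc(\x)|\xbar}$ ``by linearity'' and never mentions the cross terms, whereas you correctly observe that signal dependence of $\nc(\x)$ forbids a naive appeal to independence, so the cross-covariance $\H\,\E{(\x-\muvec_{\x|\xbar})\,\E{\nc-\muvec_{\nc|\xbar}\,|\,\x}\transpose|\xbar}$ vanishes only if $\E{\nc|\x}$ is constant in $\x$ (as it is for the \ac{SDCN} special case, where $\E{\g{\x}\hadamard\nbase\,|\,\x}=\nullmatrix$). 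That condition is implicitly assumed but not stated in the paper, so your flagged ``obstacle'' is a real gap in the paper's own argument rather than in yours. Your secondary concern is also well founded: the paper's proof derives $\Cov{\nc(\x)|\xbar}$, which centers $\nc$ at $\muvec_{\nc|\xbar}$, so the appearance of $\covfunc(\nc,\muvec_{\y|\xbar})$ in~\eqref{eq:detector:moments_y:main:covariance} is consistent only when $\H\muvec_{\x|\xbar}=\nullmatrix$ and otherwise looks like a typo for $\covfunc(\nc,\muvec_{\nc|\xbar})$; your instinct to center at $\muvec_{\nc|\xbar}$ is the correct one.
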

\scaleProof\begin{proof}
    Using the linearity of the expectation operator applied to \eqref{eq:system_model:channel_noise} yields
    \vspace*{-5mm}\begin{subequations}
        \begin{equation*}
            \muvec_{\y|\xbar} = \H\muvec_{\x|\xbar} + \E{\nc(\x)|\xbar}
        \end{equation*}
        \begin{equation*}
            \covmat_{\y|\xbar} =  \H\covmat_{\x|\xbar}\H\transpose + \Cov{\nc(\x)|\xbar}. 
        \end{equation*}
    \end{subequations}
    Thus, we see that the mean and the covariance can be divided into a component due to $\x$ and component due to the channel noise $\nc(\x)$. To obtain the mean of $\nc(\x)$ for a given $\xbar$, we use 
    \begin{align*}
        \E{\nc(\x)|\xbar} &= \int \nc p_{\nc|\xbar}(\nc|\xbar) \mathrm{d}\nc \\
        &\stackrel{\text{(a)}}{=} \int \nc \int p_{\x,\nc|\xbar}(\x,\nc|\xbar) \mathrm{d}\x \mathrm{d}\nc \\
        &\stackrel{\text{(b)}}{=}  \int \nc \int p_{\nc|\x}(\nc|\x) p_{\x|\xbar}(\x|\xbar) \mathrm{d}\x \mathrm{d}\nc.
    \end{align*}
    
    \vspace*{-2mm}
    First, for (a), we rewrite the conditional distribution of $\nc$ by marginalizing the joint distribution of $\x$ and $\nc$ (conditional on~$\xbar$) and then for (b), we exploit that $\nc$ and $\xbar$ are conditionally independent given $\x$. %
    Thus, for (b) it follows that the joint distribution can be rewritten in terms of $p_{\nc|\x}(\nc|\x)$ and $p_{\x|\xbar}(\x|\xbar)$, the latter being obtained by convolving $p_{\ntx|\xbar}(\ntx|\xbar)$ with the multi-dimensional generalization of a Dirac impulse at $\xbar$.
    The expression for the covariance matrix $\Cov{\nc(\x)|\xbar}$ is obtained analogously.
\end{proof}
\scaleProposition
\begin{resultsin}[$\y$]
    For the special case of \ac{SIN}, the moments can be straightforwardly obtained as \vspace*{-2mm}\begin{subequations}
        \begin{align}
            \muvec_{\y|\xbar} &= \H\muvec_{\x|\xbar} + \muvec_{\nc}\SIN\\
            \covmat_{\y|\xbar} &= \H \covmat_{\x}\H\transpose + \covmat_{\nc}\SIN.
        \end{align}
    \end{subequations}
\end{resultsin}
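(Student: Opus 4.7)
The plan is to specialize the general proposition for $\y$ by invoking the defining SIN property that the channel noise is independent of the transmitted signal, i.e., $p_{\nc|\x}(\nc|\x) = p_{\nc}(\nc)$, with $\nc$ characterized by $\muvec_{\nc}\SIN$ and $\covmat_{\nc}\SIN$ as stipulated in Section~\ref{sec:system_model:special_cases:SIN}. Substituting this independence into the nested integral in \eqref{eq:detector:moments_y:main:mean} collapses the inner integral,
\begin{equation*}
\int p_{\nc|\x}(\nc|\x)\,p_{\x|\xbar}(\x|\xbar)\,\mathrm{d}\x
= p_{\nc}(\nc)\int p_{\x|\xbar}(\x|\xbar)\,\mathrm{d}\x
= p_{\nc}(\nc),
\end{equation*}
so the remaining outer integral reduces to $\int \nc\, p_{\nc}(\nc)\,\mathrm{d}\nc = \muvec_{\nc}\SIN$ by the definition of the mean, yielding the first claim.

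For the covariance in \eqref{eq:detector:moments_y:main:covariance}, the same pdf collapse produces $\int \covfunc(\nc,\muvec_{\y|\xbar})\,p_{\nc}(\nc)\,\mathrm{d}\nc$. I would then observe that $\muvec_{\y|\xbar} = \H\muvec_{\x|\xbar} + \muvec_{\nc}\SIN$, and that the piece $\H\muvec_{\x|\xbar}$ entering the quadratic $\covfunc(\cdot,\cdot)$ is a deterministic constant independent of the integration variable $\nc$; hence it cancels between the two factors of the outer product, leaving $(\nc - \muvec_{\nc}\SIN)(\nc - \muvec_{\nc}\SIN)\transpose$ integrated against $p_{\nc}(\nc)$, which is exactly $\covmat_{\nc}\SIN$.

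A shorter, equivalent route is to apply $\E{\cdot}$ and $\Cov{\cdot}$ directly to~\eqref{eq:system_model:channel_noise}. Linearity of $\E{\cdot}$ immediately gives the mean. For the covariance, signal-independence implies the conditional cross-covariance $\Cov{\x,\nc|\xbar}$ vanishes, so the standard decomposition $\covmat_{\y|\xbar} = \H\,\Cov{\x|\xbar}\,\H\transpose + \Cov{\nc|\xbar}$ contains no cross-term and reduces to $\H\covmat_{\x}\H\transpose + \covmat_{\nc}\SIN$ (where the dropped conditioning on $\xbar$ in the first term is justified by the preceding SIN/SDCN proposition for $\x$, which established $\covmat_{\x|\xbar} = \covmat_{\ntx}\SIN$ is $\xbar$-independent). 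The main — and admittedly mild — obstacle is simply to make the vanishing of the conditional cross-covariance explicit; everything else is bookkeeping.
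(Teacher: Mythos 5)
Your second, shorter route is exactly the paper's proof: apply $\E{\cdot}$ and $\Cov{\cdot}$ to \eqref{eq:system_model:channel_noise} and insert the \ac{SIN} parameters from Section~\ref{sec:system_model:special_cases:SIN}; your explicit observation that signal-independence makes the cross-covariance between $\H\x$ and $\nc$ vanish is a welcome addition, since the paper's general proposition for $\y$ writes the additive decomposition without commenting on it, and your justification for dropping the conditioning in $\covmat_{\x}$ via the preceding proposition for $\x$ is also sound. Your first route, however, contains a genuine error. In \eqref{eq:detector:moments_y:main:covariance} the noise is centered at $\muvec_{\y|\xbar}$, and a deterministic offset does \emph{not} ``cancel between the two factors of the outer product'': writing $\mathbf{k}=\H\muvec_{\x|\xbar}$, one has $\int \covfunc(\nc,\muvec_{\y|\xbar})\,p_{\nc}(\nc)\,\mathrm{d}\nc = \covmat_{\nc}\SIN + \mathbf{k}\mathbf{k}\transpose$, because the cross terms proportional to $\E{\nc}-\muvec_{\nc}\SIN=\nullmatrix$ vanish but the quadratic term $\mathbf{k}\mathbf{k}\transpose$ survives. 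Taken literally, your specialization therefore produces a spurious rank-one term rather than $\covmat_{\nc}\SIN$. The resolution (which also exposes an imprecision in the paper's general formula as printed) is that the noise contribution to the covariance must be centered at the noise's own conditional mean $\muvec_{\nc|\xbar}$, i.e., one computes $\Cov{\nc|\xbar}$; with that correction your first route also lands on the stated result. Since your second route is self-contained and correct, the conclusion stands, but the first derivation as written would not survive scrutiny.
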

\vspace*{-2mm}
\scaleProof\begin{proof}
    We again exploit the linearity of the expectation operator applied to \eqref{eq:system_model:channel_noise} and then insert the definitions for the mean and covariance of the channel noise from Section~\ref{sec:system_model:special_cases:SIN}.
\end{proof}
\scaleProposition
\begin{resultsdcn}[$\y$]
    For \ac{SDCN}, we compute the conditional mean of $\y$ and estimate the conditional covariance of $\y$ using the \ac{UT} instead of computing \eqref{eq:detector:moments_y:main:covariance}, yielding
    \begin{subequations}
        \begin{equation}\label{eq:detection:y:sdcn:mean}
            \muvec_{\y|\xbar} = \H\muvec_{\x|\xbar}
        \end{equation}
        \begin{equation}
            \covmatapprox_{\y|\xbar} = \H \covmat_{\x}\H\transpose + \diagvec{\diagmat{\covmatapprox_{\sqrt{\nuc\H\x}|\xbar}}} + \diagvec{\muvecapprox^2_{\sqrt{\nuc\H\x} | \xbar}},
        \end{equation}
    \end{subequations}
    where $\muvecapprox_{\sqrt{\nuc\H\x} | \xbar}$ and $\covmatapprox_{\sqrt{\nuc\H\x}|\xbar}$ are respectively the approximate mean and covariance of random vector $\sqrt{\nuc\H\x}$. 
\end{resultsdcn}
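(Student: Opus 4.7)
The plan is to apply the expectation and covariance operators directly to \eqref{eq:system_model:channel_noise} and exploit the \ac{SDCN} structure $\nc(\x) = \sqrt{\nuc\H\x} \hadamard \nbase$, where $\nbase \sim \normal{\zerovec,\I}$ is independent of $\x$ (and hence of $\xbar$). The independence together with $\E{\nbase}=\zerovec$ and $\E{\nbase\nbase\transpose}=\I$ will do the heavy lifting, reducing everything to moments of the random vector $\sqrt{\nuc\H\x}$.

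For the conditional mean, I would apply $\E{\cdot}$ to \eqref{eq:system_model:channel_noise}. Coordinate-wise, independence of $\nbase$ from $\x$ yields $\E{[\sqrt{\nuc\H\x}]_i \cdot [\nbase]_i \mid \xbar} = \E{[\sqrt{\nuc\H\x}]_i \mid \xbar}\cdot \E{[\nbase]_i} = 0$, so the channel noise contributes nothing to the mean and \eqref{eq:detection:y:sdcn:mean} follows immediately from $\E{\H\x\mid\xbar}=\H\muvec_{\x|\xbar}$. For the covariance, the same independence makes the cross-covariance $\Cov{\H\x,\nc(\x)\mid\xbar}$ vanish (each entry contains a factor $\E{[\nbase]_j}=0$), which gives the decomposition $\covmat_{\y|\xbar} = \H\covmat_{\x|\xbar}\H\transpose + \Cov{\nc(\x)\mid\xbar}$. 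Since $\E{\nc(\x)\mid\xbar}=\zerovec$, I would evaluate the $(i,j)$ entry of $\Cov{\nc(\x)\mid\xbar}$ as $\E{[\sqrt{\nuc\H\x}]_i[\sqrt{\nuc\H\x}]_j\cdot[\nbase]_i[\nbase]_j\mid\xbar}$. Factoring the $\nbase$ part out by independence and using $\E{[\nbase]_i[\nbase]_j}=\delta_{ij}$ kills every off-diagonal entry, so the diagonal entries collapse to $\E{[\sqrt{\nuc\H\x}]_i^2\mid\xbar} = \V{[\sqrt{\nuc\H\x}]_i\mid\xbar} + \E{[\sqrt{\nuc\H\x}]_i\mid\xbar}^2$. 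Packing these back into a diagonal matrix produces exactly the two diagonal-matrix terms appearing in the proposition, but in terms of the exact mean and covariance of $\sqrt{\nuc\H\x}$.

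The main obstacle is that the element-wise square root prevents a closed-form evaluation of $\muvec_{\sqrt{\nuc\H\x}|\xbar}$ and $\covmat_{\sqrt{\nuc\H\x}|\xbar}$. I would sidestep this by invoking the \ac{UT} from Section~\ref{sec:detection:ut}: generate $2\nspecies$ sigma points from $\muvec_{\x|\xbar}$ and $\covmat_{\x|\xbar}$ (both obtained in Section~\ref{sec:detection:moments:x}), push them through the non-linear map $\x \mapsto \sqrt{\nuc\H\x}$, and read off the empirical mean and covariance as the approximations $\muvecapprox_{\sqrt{\nuc\H\x}|\xbar}$ and $\covmatapprox_{\sqrt{\nuc\H\x}|\xbar}$. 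Substituting these in place of the exact moments in the diagonal-matrix expression yields the claimed formula.
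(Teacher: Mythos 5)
Your proposal is correct and follows essentially the same route as the paper: the independence of $\nbase$ from $\x$ kills the mean contribution and the off-diagonal covariance entries, the diagonal entries reduce to second moments of $\sqrt{\nuc\H\x}$, and the \ac{UT} supplies the approximate moments of that non-linear transform. Your explicit verification that the cross-covariance between $\H\x$ and $\nc(\x)$ vanishes is a slightly more careful treatment of a step the paper leaves implicit, but it is the same argument.
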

\scaleProof\begin{proof}
    We consider a slight generalization of \eqref{eq:system_model:SDCN:noise_definition} and assume that $\nc(\x)$ is given as \mbox{$\nc(\x) = \g\x \hadamard \nbase$}, where $\nbase$ is zero-mean Gaussian with covariance matrix $\I$ and $\g\x$ is some non-linear function of $\x$. For \ac{SDCN}, we have $\g{\x} = \sqrt{\nuc\H\x}$.
    Using the auxiliary random vector $\a = \g\x$, we obtain for the mean $\muvec_{\nc(\x)|\xbar} = \E{\a \hadamard \nbase|\xbar} = \E{\a|\xbar} \hadamard \E{\nbase|\xbar} = \E{\a|\xbar} \hadamard \nullmatrix = \nullmatrix,$
        thus yielding \eqref{eq:detection:y:sdcn:mean} directly. 

    The expression for $\Cov{\nc(\x)|\xbar}$ is not as straightforward but can be approximated in two steps: First, we obtain the mean $\muvecapprox_{\a|\xbar}$ and covariance $\covmatapprox_{\a|\xbar}$  of $\a$ using the \ac{UT}. Second, we compute the element in the $i$-th row and the $j$-th column of the covariance matrix of $\u = \a \hadamard \nbase$ according to 
    \vspace*{-2mm}
    \begin{align*}
        \left[ \covmat_{\u|\xbar} \right]_{i,j} &= \left[\E{(\u-\muvec_{\u|\xbar})(\u-\muvec_{\u|\xbar})\transpose|\xbar}\right]_{i,j} \\
        &= \E{[\u-\muvec_{\u|\xbar}]_i[\u-\muvec_{\u|\xbar}]_j|\xbar} \\
        &= \E{[\u-\muvec_{\a|\xbar}\hadamard\nullmatrix]_i[\u-\muvec_{\a|\xbar}\hadamard\nullmatrix]_j|\xbar} \\
        &= \E{[\u]_i[\u]_j|\xbar} = \E{[\a \hadamard \nbase]_i[\a \hadamard \nbase]_j|\xbar} \\
        &= \E{ [\a]_i [\nbase]_i [\a]_j [\nbase]_j |\xbar }  \\
        &\stackrel{\text{(a)}}{=} \E{ [\a]_i [\a]_j|\xbar} \cdot \E{ [\nbase]_i [\nbase]_j |\xbar }  \\
        &\stackrel{\text{(b)}}{=} \begin{cases}
            [\covmatapprox_{\a|\xbar}]_{i,i} + [ \muvecapprox_{\a|\xbar} ]_i^2 & \text{if } i = j \\
            0 & \text{otherwise}
        \end{cases}.
    \end{align*}
    
    \vspace*{-2mm}
    For (a), we used that $\E{h_1(x_1)h_2(x_2)}= \E{h(x_1)} \E{h_2(x_2)}$, for some independent RVs $x_1$, $x_2$ and general functions $h_1(\cdot)$, $h_2(\cdot)$. For (b), we used that for $i = j$, $\E{ [\a]_i [\a]_j|\xbar} = [\covmatapprox_{\a|\xbar}]_{i,i} + [ \muvecapprox_{\a|\xbar} ]_i^2$, and for $i \neq j$ that $\E{ [\nbase]_i [\nbase]_j |\xbar } = 0$. Note that this result is intuitive for our channel noise formulation in Section~\ref{sec:system_model:special_cases:SDCN}, which is essentially an approximation of Poisson noise for $\nuc=1$ and affects each molecule type individually \cite{jamali:olfaction_inspired_MC}, thus rendering the noise affecting different molecule types independent. 
\end{proof}
In summary, we showed in this subsection not only a general approach for obtaining the likelihoods of each symbol and a simple solution for the case of \ac{SIN}, but also how the \ac{UT} can be used to obtain approximate moments of $\nc(\x) = \g\x \hadamard \nbase$. It is worth noting that the same approach could also be generalized to more complex expressions of $\nc(\x)$.

\subsubsection{Moments of $\z$}
As a final step, we obtain the approximate moments of $\z$ for a given symbol $\xbar$ using the \ac{UT} and the results from the previous section. 
\begin{resultgeneral}[$\z$]
We approximate the mean $\muvec_{\z|\xbar}$ and the covariance $\covmat_{\z|\xbar}$ respectively as $\muvecapprox_{\z|\xbar}$ and $\covmatapprox_{\z|\xbar}$, which can be obtained as follows: 
    {\allowdisplaybreaks\begin{subequations}\label{eq:detector:moments_z:main}
        \begin{equation}
            \muvecapprox_{\z|\xbar} = \muvecapprox_{\f\y|\xbar} + \!\!\int \!\!\!\nrx \!\!\!\!\int \!\!\!p_{\nc|\f\y}(\nrx|\f\y) p^{\mathrm{A}}_{\f\y|\xbar}(\f\y|\xbar) \mathrm{d}\y \mathrm{d}\nrx
        \end{equation}
        \begin{equation}
        \begin{split}
            \covmatapprox_{\z|\xbar} &= \covmatapprox_{\f\y|\xbar} +\int \covfunc(\nrx,\muvecapprox_{\z|\xbar})\\ &\cdot \int p_{\nc|\f\y}(\nrx|\f\y) p^{\mathrm{A}}_{\f\y|\xbar}(\f\y|\xbar) \mathrm{d}\y \mathrm{d}\nrx.
            \end{split}
        \end{equation}
    \end{subequations}}
    
    Here, $\muvecapprox_{\f\y|\xbar}$ and $\covmatapprox_{\f\y|\xbar}$ denote the approximate mean and covariance of $\f\y$ that are obtained by propagating $\muvec_{\y|\xbar}$ and $\covmat_{\y|\xbar}$, or their approximations, through $\f\cdot$ using the \ac{UT}. These moments are then used to parametrize the approximate distribution $p^{\mathrm{A}}_{\f\y|\xbar}(\f\y|\xbar)$ of $\f\y$ for a given symbol $\xbar$. 
\end{resultgeneral}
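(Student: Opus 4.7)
The plan is to mirror the strategy used for the moments of $\y$ in the general case, with the essential new ingredient being the non-linear sensor map $\f\cdot$. Starting from $\z = \f\y + \nrx(\f\y)$, I would first apply the expectation operator and use linearity to split $\muvec_{\z|\xbar}$ into a contribution from $\f\y$ and one from $\nrx(\f\y)$, and then split $\covmat_{\z|\xbar}$ analogously. As in the proposition for $\y$, the covariance decomposition relies implicitly on the two summands being uncorrelated, which follows from the same conditional-independence argument used there, applied one stage downstream.

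For the $\f\y$ contribution, neither $\E{\f\y|\xbar}$ nor $\Cov{\f\y|\xbar}$ admits a closed form because $\f\cdot$ is non-linear. I would therefore invoke the \ac{UT} from Section~\ref{sec:detection:ut}: generate $2\nspecies$ sigma points from the moments $\muvec_{\y|\xbar}$ and $\covmat_{\y|\xbar}$ delivered by the preceding proposition (or their \ac{SDCN} approximations), push them through $\f\cdot$, and form the empirical mean $\muvecapprox_{\f\y|\xbar}$ and covariance $\covmatapprox_{\f\y|\xbar}$ of the transformed sigma points. These two quantities are exactly the first summand in each expression of the claim.

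For the $\nrx$ contribution, I would repeat the marginalization trick from the proof of the general $\y$ proposition, but one layer deeper. Writing $\E{\nrx(\f\y)|\xbar}$ as an integral of $\nrx$ against the marginal $p_{\nrx|\xbar}$, I would expand that marginal via the joint with $\f\y$ as $p_{\nrx|\f\y}(\nrx|\f\y)\,p_{\f\y|\xbar}(\f\y|\xbar)$, yielding a double integral. Since the true density $p_{\f\y|\xbar}$ is inaccessible, I would replace it by the Gaussian surrogate $p^{\mathrm{A}}_{\f\y|\xbar}$ parametrized by the UT moments from the previous step, which reproduces exactly the expression in the statement. The covariance piece follows by the same route after substituting $\covfunc(\nrx,\muvecapprox_{\z|\xbar})$ for $\nrx$ in the outer integrand.

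The main obstacle I expect is bookkeeping rather than calculation: carefully tracking which quantities are exact and which are Gaussian surrogates after the non-linear sensor stage. In particular, the $\muvecapprox_{\z|\xbar}$ appearing inside the covariance integral is itself an approximation rather than the true mean, and the decorrelation argument that justifies splitting $\Cov{\cdot}$ now operates on $\f\y$ rather than on $\x$ or $\y$, so conditional independence is only preserved under the surrogate distribution. Making these approximation boundaries explicit in the write-up is the cleanest way to avoid circular or inconsistent definitions in the final expressions.
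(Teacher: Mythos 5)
Your proposal is correct and follows essentially the same route as the paper: the paper's own proof simply states that the argument is analogous to the one for $\y$, with the UT-derived estimates $\muvecapprox_{\f\y|\xbar}$, $\covmatapprox_{\f\y|\xbar}$ and the Gaussian surrogate $p^{\mathrm{A}}_{\f\y|\xbar}(\f\y|\xbar)$ replacing the exact quantities, which is precisely the decomposition-plus-marginalization strategy you spell out. Your additional remarks on tracking which quantities are exact versus surrogate are a sensible elaboration of what the paper leaves implicit.
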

\scaleProof\begin{proof}
    The proof is analogous to Section~\ref{sec:detection:moments:y} except that we employ the estimates $\muvecapprox_{\f\y|\xbar}$ and $\covmatapprox_{\f\y|\xbar}$ and that we have to use an approximate distribution $p^{\mathrm{A}}_{\f\y|\xbar}(\f\y)$ instead of $p_{\x|\xbar}(\x|\xbar)$ due to the non-linearity of $\f\cdot$. 
\end{proof}
\scaleProposition
\begin{resultsinsdcn}[$\z$]
    For both \ac{SIN} and \ac{SDCN}, we obtain
    \vspace*{-3mm}
    \begin{subequations}
        \begin{equation}
            \muvecapprox_{\z|\xbar} = \muvecapprox_{\f\y|\xbar} + \muvec_{\nrx}\SIN
        \end{equation}
        \begin{equation}
            \covmatapprox_{\z|\xbar} = \covmatapprox_{\f\y|\xbar} + \covmat_{\nrx}\SIN.
        \end{equation}
    \end{subequations}
\end{resultsinsdcn}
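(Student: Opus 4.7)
The plan is to specialize the general proposition for $\z$ by exploiting that in both the \ac{SIN} and \ac{SDCN} cases, the \ac{RX} noise $\nrx$ is, by the construction in Section~\ref{sec:system_model:special_cases:SIN}, signal-independent with mean $\muvec_{\nrx}\SIN$ and covariance $\covmat_{\nrx}\SIN$. Concretely, I would start from \eqref{eq:system_model:rx_noise} and apply $\E{\cdot}$ conditioned on $\xbar$, mirroring the argument used for $\x$ and $\y$ in Sections~\ref{sec:detection:moments:x} and \ref{sec:detection:moments:y}. By linearity of expectation, $\muvecapprox_{\z|\xbar} = \muvecapprox_{\f\y|\xbar} + \E{\nrx|\xbar}$, and signal-independence reduces $\E{\nrx|\xbar}$ to the unconditional mean $\muvec_{\nrx}\SIN$, giving the claimed expression for the mean.

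Next, I would apply $\Cov{\cdot|\xbar}$ to \eqref{eq:system_model:rx_noise}. Since $\nrx$ is signal-independent, it is in particular independent of $\f\y$ conditional on $\xbar$, so the cross-terms vanish and the covariance splits additively into the covariance of $\f\y$ (already approximated as $\covmatapprox_{\f\y|\xbar}$ via the \ac{UT} in Section~\ref{sec:detection:moment_derivation}) plus $\covmat_{\nrx}\SIN$. As a sanity check, I would verify that the same result is obtained by substituting $p_{\nrx|\f\y}(\nrx|\f\y) = p_{\nrx}(\nrx)$ into the general integral expressions from \eqref{eq:detector:moments_z:main}: the inner integrals over $\y$ then collapse (since $p^{\mathrm{A}}_{\f\y|\xbar}$ integrates to one), leaving $\int \nrx p_{\nrx}(\nrx)\mathrm{d}\nrx = \muvec_{\nrx}\SIN$ and $\int \covfunc(\nrx, \muvec_{\nrx}\SIN) p_{\nrx}(\nrx)\mathrm{d}\nrx = \covmat_{\nrx}\SIN$, consistent with the direct derivation.

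I do not anticipate a real obstacle here: the proposition is essentially a corollary of the general result, since the only thing the specialization uses is that signal-independence of $\nrx$ decouples it from $\f\y$. The only minor subtlety worth spelling out is that this decoupling holds uniformly across both \ac{SIN} and \ac{SDCN} because the signal-dependence introduced in the \ac{SDCN} case lives in the channel noise $\nc(\x)$ (already absorbed into $\muvecapprox_{\f\y|\xbar}$ and $\covmatapprox_{\f\y|\xbar}$ via the proposition for $\y$), and does not propagate into $\nrx$ under the assumptions of Section~\ref{sec:system_model:special_cases:SDCN}. I would therefore keep the proof to two short displays, paralleling the \ac{SIN} proof for $\y$.
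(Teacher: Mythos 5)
Your proposal is correct and takes essentially the same route as the paper, which simply notes that the proof is analogous to the \ac{SIN} case for $\y$: apply $\E{\cdot}$ and $\Cov{\cdot}$ to \eqref{eq:system_model:rx_noise}, exploit the signal-independence of $\nrx$, and substitute the \ac{UT}-based approximations $\muvecapprox_{\f\y|\xbar}$ and $\covmatapprox_{\f\y|\xbar}$ for the exact moments of $\f\y$. Your additional sanity check against the general integral expressions and the remark that the \ac{SDCN} signal-dependence is already absorbed into the moments of $\f\y$ are consistent with, and slightly more explicit than, the paper's one-line argument.
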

\scaleProof\begin{proof}
    The proof is analogous to the corresponding proof in Section~\ref{sec:detection:moments:y} with the same adaptations as in the \mbox{previous proof}.
\end{proof}

\vspace*{-6mm}\scaleSection\section{Mixture Alphabet Design}\scaleSectionBelow\label{sec:mixture_design}
In Section~\ref{sec:mixture_design:overview}, we introduce a simple algorithm to generate a mixture alphabet accounting for the characteristics of the \ac{RX} array, and in Section~\ref{sec:mixture_design:metrics}, we discuss different metrics to quantify the pairwise \textit{separability} of symbols in the sensing \mbox{unit output domain}. 

\scaleSubsection\subsection{Algorithm Overview}\scaleSubsectionBelow\label{sec:mixture_design:overview}
While existing schemes for designing signal constellations such as~\cite{wang:effective_constellation_design_CSK} make idealizing assumptions about the \ac{RX} characteristics, e.g., the ability to count individual molecules, or consider only a specific \ac{RX} as in \cite{jamali:olfaction_inspired_MC}, our algorithm is applicable to general \acp{RX} with possibly non-linear and cross-reactive characteristics. As this problem will in general not permit finding globally optimal solutions (in terms of minimizing the \ac{SER} for a given detector) in a reasonable time, we propose a suboptimal, greedy algorithm that optimizes the pair-wise \textit{separability} of the symbols in the sensing unit output domain. 

\vspace*{-3mm}
\begin{algorithm}
\caption{Mixture Alphabet Design Algorithm\label{alg:mixture_design}}
\begin{algorithmic}[1]
\State \textbf{Initialization}
\State \text{Initialize symbol alphabet:} $\symbolalphabet = \{\}$
\State \text{Generate initial point:} $\mathbf{i}_0 \sim \mathcal{U}(\feasibleset)$
\State \text{Generate initial candidate set} $\mathcal{C}$ \text{s.t.} $\mathbf{c} \sim \mathcal{U}(\feasibleset)$, $\forall \mathbf{c} \in \mathcal{C}$
\State \text{Add initial signal point:} $\symbolalphabet = \symbolalphabet \cup\{ \arg\max_{\mathbf{c} \in \mathcal{C}} d(\mathbf{c}, \mathbf{i}_0)$\}

\State \textbf{Iteration}
\While{ $|\symbolalphabet| \leq \nsymbols $ }
    \State \text{Generate candidate set} $\mathcal{C}$ \text{s.t.} $\mathbf{c} \sim \mathcal{U}(\feasibleset)$, $\forall \mathbf{c} \in \mathcal{C}$
    \State \mbox{\text{Add signal point:} $\symbolalphabet = \symbolalphabet \cup\{ \arg\max_{\mathbf{c} \in \mathcal{C}} \;\min_{\mathbf{s} \in \symbolalphabet} d(\mathbf{c}, \mathbf{s})\}$ }
\EndWhile
\end{algorithmic}
\end{algorithm}
\vspace*{-4mm}

Our approach, summarized in Algorithm~\ref{alg:mixture_design}, can be divided into an initialization stage during which the first symbol is chosen and into an iteration stage during which the remaining symbols are added one by one until $\symbolalphabet$ contains the desired number of elements, $\nsymbols$. 
During the \textbf{initialization stage}, we choose an initial point $\mathbf{i}_0$ that is drawn from the uniform distribution over the feasible set $\feasibleset$ (line~3) and, just as during the iteration phase, we create a candidate set $\mathcal{C}$ whose $|\mathcal{C}|$ elements are uniformly distributed in $\feasibleset$ (line~4). Then, we select the candidate with the largest distance to $\mathbf{i}_0$ (as measured by some distance metric $d(\cdot, \cdot)$, which we will introduce in Section~\ref{sec:mixture_design:metrics}) as the first symbol (line~5). This procedure ensures that the first symbol will usually lie somewhere close to the boundary of $\feasibleset$ and not in the middle of $\feasibleset$, which could impair the overall quality of the mixture alphabet. 
Then, during the \textbf{iteration stage} (lines~6-10), we add one new symbol per iteration until the full mixture alphabet is constructed. There, the candidate of the newly generated candidate set $\mathcal{C}$ with the largest minimum distance to any existing symbol in $\symbolalphabet$ is selected.
The quality of the constructed mixture alphabet is determined by the size of the candidate set $|\mathcal{C}|$ and the chosen distance metric. Thus, we propose several distance metrics in the next section.

\scaleSubsection\subsection{Distance Metrics}\scaleSubsectionBelow\label{sec:mixture_design:metrics}
Here, we introduce three distance metrics that quantify the pairwise separability between two symbols in the sensing unit output domain. 
All metrics rely only on the approximate first- and second-order moments in the sensing unit output domain, which were obtained in Section~\ref{sec:detection:moment_derivation}. 

\subsubsection{$\ell_2$ Distance.} An obvious choice for $d(\cdot,\cdot)$ is the $\ell_2$ distance between the mean values of the two symbols $\xbar_i$ and $\xbar_j$ in the sensing unit output domain, i.e.,
\vspace*{-1mm}
 \begin{equation}
    \deuclidean(\xbar_i, \xbar_j) = \lVert \muvecapprox_{\z|\xbar_i} - \muvecapprox_{\z|\xbar_j} \rVert_2.
\end{equation}

\vspace*{-3mm}
\subsubsection{Distance Metric from \cite{jamali:olfaction_inspired_MC}.} To account for signal spread, we use the distance metric incorporating the symbols' covariances that was introduced in \cite{jamali:olfaction_inspired_MC} as a generalization of the \ac{SNR} to \ac{MMSK}. Specifically, this metric increases with the Euclidean distance between the mean values of the two symbols, but it decreases with increasing variance along the shortest path between the mean values. Formally, the metric is given by
\vspace*{-2mm}
\begin{equation}\label{sec:mixture_design:metrics:vahid}
    \dvahid(\xbar_i, \xbar_j) = \frac{ \lVert \muvecapprox_{\z|\xbar_i} - \muvecapprox_{\z_j|\xbar} \rVert_2^2 }{ \mathbf{p}\transpose\covmatapprox_{\z|\xbar_i}\mathbf{p} + \mathbf{p}\transpose\covmatapprox_{\z|\xbar_j}\mathbf{p} }
\end{equation}

\vspace*{-2mm}
where $\mathbf{p} = \frac{\muvecapprox_{\z|\xbar_i} - \muvecapprox_{\z|\xbar_j}}{\lVert \muvecapprox_{\z|\xbar_i} - \muvecapprox_{\z|\xbar_j} \rVert_2}$. While computationally more complex, $\dvahid(\cdot, \cdot)$ promises a higher performance compared to $\deuclidean(\cdot, \cdot)$ due to the consideration of the covariances of both symbols. 

\subsubsection{Approximate \ac{PEP}.} Since the goal of Algorithm~\ref{alg:mixture_design} is to maximize the minimum \ac{PEP} which serves as a proxy for the \ac{SER}, a natural metric is the approximate \ac{PEP} between two symbols:
\vspace*{-2mm}
\begin{equation}
    \dber(\xbar_i, \xbar_j) = - \int \min\left\{ p^{\mathrm{A}}_{\z|\xbar_i}(\z|\xbar_i),\; p^{\mathrm{A}}_{\z|\xbar_j}(\z|\xbar_j) \right\} \mathrm{d}\z,\vspace*{-2mm}
\end{equation}
where $p^{\mathrm{A}}_{\z|\xbar_i}(\z|\xbar_i)$ and $p^{\mathrm{A}}_{\z|\xbar_j}(\z|\xbar_j)$ are the approximate likelihoods for the two symbols $\xbar_i$ and $\xbar_j$ introduced in Section~\ref{sec:detection}. Here, the ''$-$'' is required so that smaller \acp{PEP} correspond to larger distances. Note that $\dber(\cdot, \cdot)$ incurs the highest computational complexity of all proposed metrics as it requires numerical integration.

In summary, we proposed a simple algorithm for mixture alphabet design optimizing the pairwise separability in the sensing unit output domain based on one of the metrics introduced in Section~\ref{sec:mixture_design:metrics}.

\scaleSection\section{Evaluation}\scaleSectionBelow\label{sec:evaluation}
In this section, we present simulation results for the \ac{AML} detector proposed in Section~\ref{sec:detection} and the mixture alphabet design algorithm proposed in Section~\ref{sec:mixture_design} and compare both to relevant baselines.
We provide simulations for both the \ac{SIN} scenario and for the \ac{SDCN} scenario in conjunction with the non-linear model for an array of two \ac{MOS} sensors (see Section~\ref{sec:system_model:special_cases} for the definitions). We parametrized~\eqref{eq:system_model:mos} by fitting it in a least-squares sense to the measurement data for two commercially available \ac{MOS} sensors (TGS800 and TGS826) reported in~\cite{hirobayashi:verification_logarithmic_model_estimation_gas_concentrations} responding to mixtures of ethanol and ammonia, two of the most relevant evolutionary communication molecules. 
The fitted coefficients $\a\sensorone$, $\a\sensortwo$, $\b\sensorone$, $\b\sensortwo$ for the two sensors are reported together with the remaining parameters\footnote{While we consider in this work a molecule type-specific molecule budget, as would be induced by typical odor delivery devices such as the one reported in \cite{hopper:multichannel_portable_odor_delivery_device}, both our detector and our mixture alphabet design algorithm are also compatible with other molecule constraints, which will be explored in future work.}$^,$\footnote{We consider the cases of 8 or 16 mixtures, corresponding to a rate of $3\mathrm{bit}/\mathrm{channel\;use}$ and $4\mathrm{bit}/\mathrm{channel\;use}$, respectively.} that are used for all our simulations in Table~\ref{tab:params}.
To vary the noise power, and thus the \ac{SNR}, relative to the values reported in Table~\ref{tab:params}, we multiply all covariance matrices by a common factor $\nu$ and set $\nuc=\nu$.
\begin{table}[!t]
    \vspace*{-2mm}
    \centering
    \caption{Default system parameters for the simulations.}
    \vspace*{-0.4cm}
    \def\arraystretch{1.2}
    \begin{tabular}{|l|c|c|} %
        \hline
        Parameter & Value (SIN) & Value (SDCN)\\ %
        \hline
        \hline
        $\nsensors$ & \multicolumn{2}{c|}{$2$} \\ \hline
        $\nspecies$ & \multicolumn{2}{c|}{$2$} \\ \hline\hline
        $\H$ & \multicolumn{2}{c|}{$0.01 \cdot \I$} \\ \hline\hline
        $\muvec_{\ntx}$ & \multicolumn{2}{c|}{$\nullmatrix$} \\ \hline
        $\covmat_{\ntx}$ & $10^6 \cdot \I$ & $10^2 \cdot \I$ \\ \hline
        $\muvec_{\nc}$ & $\nullmatrix$ & - \\ \hline
        $\covmat_{\nc}$ & $1 \cdot \I$ & - \\ \hline
        $\muvec_{\nrx}$ & \multicolumn{2}{c|}{$\nullmatrix$} \\ \hline
        $\covmat_{\nrx}$ & $10^{-12} \cdot \I$ & $0.5 \cdot 10^{-12} \cdot \I$ \\ \hline\hline
        $\a\sensorone$ & \multicolumn{2}{c|}{$[2.02 \cdot 10^{-13}, 6.46 \cdot 10^{-6}, 1.61 \cdot 10^{-14}, 7.43 \cdot 10^{-7}]$} \\ \hline
        $\b\sensorone$ & \multicolumn{2}{c|}{$[2.54, 4.67 \cdot 10^{-1}]$} \\ \hline
        $\a\sensortwo$ & \multicolumn{2}{c|}{$[2.16 \cdot 10^{-7}, 2.65 \cdot 10^{-6}, -2.11 \cdot 10^{-9}, 7.77 \cdot 10^{-6}]$}  \\ \hline
        $\b\sensortwo$ & \multicolumn{2}{c|}{$[7.32 \cdot 10^{-1}, 5.122 \cdot 10^{-1}]$} \\ \hline
        \hline
        $\feasibleset$ & \multicolumn{2}{c|}{ $[20 \cdot 10^3, 10 \cdot 10^4] \times [15 \cdot 10^3, 50 \cdot 10^3]$ } \\ \hline
        $|\mathcal{C}|$ & \multicolumn{2}{c|}{ $200$ } \\ 
        \hline
    \end{tabular}
    \label{tab:params}
    \vspace*{-4mm}
\end{table}

\scaleSubsection\subsection{Mixture Alphabet Design Analysis}\scaleSubsectionBelow\label{sec:evaluation:mixture_design}
Before starting with the quantitative evaluation of the mixture alphabets created by our algorithm from Section~\ref{sec:mixture_design}, we first provide some intuition \textbf{why it is important to account for the characteristics of the \ac{RX} array}. 
\begin{figure}
    \centering
    \includegraphics[width=\linewidth]{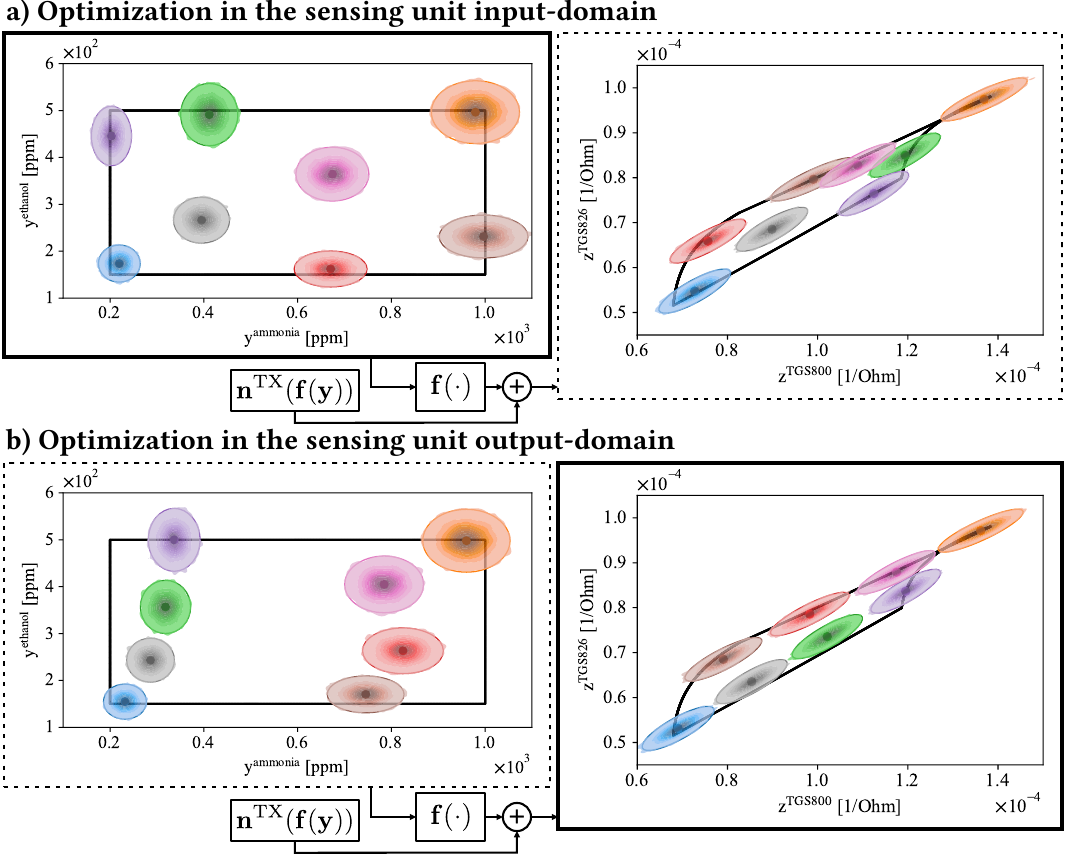}
    \vspace*{-8.2mm}
    \caption{\textbf{Exemplary mixture alphabets for different optimization domains.}}
    \label{fig:eval:optimization_domain_comparison}
    \vspace*{-7mm}
\end{figure}
To this end, we run our mixture alphabet design algorithm with $\dvahid(\cdot,\cdot)$ for the \ac{SDCN} scenario once in the \ac{SOD}, where we optimize for separability of the sensor outputs, and once in the \ac{SID}, where we optimize for separability of the concentrations. Figure~\ref{fig:eval:optimization_domain_comparison} shows the empirical likelihoods of the symbols (shaded areas) of the two resulting alphabets in both the \ac{SID} (left) and the \ac{SOD} (right) together with the $3\sigma$-confidence ellipses according to the moments that are obtained as described in Section~\ref{sec:detection:moment_derivation}. 
Clearly, the optimization in the input domain yields an alphabet with symbols whose concentrations are spread out as far as possible. However, the likelihoods overlap in the output domain due to the non-linear, cross-reactive \ac{RX} behavior. On the other hand, when optimizing for separability in the output domain, the likelihoods are more tightly packed in the input domain but well separable in the output domain, demonstrating the necessity to account for non-linear, cross-reactive \ac{RX} behavior. 

Next, we evaluate the quality of mixture alphabets created by the proposed algorithm with the metrics from Section~\ref{sec:mixture_design:metrics} and compare it with several baselines for $|\symbolalphabet|=8$. To this end, we use the \ac{SER} achieved by the \ac{AML} detector proposed in Section~\ref{sec:detection} as quality metric. 

As a first baseline, we use \textbf{Algorithm~\ref{alg:mixture_design} in the \ac{SID}}, i.e., we compute $\dvahid(\cdot, \cdot)$ for the approximate likelihoods of $\y$ (cf. Section~\ref{sec:detection:moments:y}) instead of $\z$ and thus do not account for the non-linear, cross-reactive \ac{RX} behavior. As a second baseline, we consider \textbf{equally spaced \ac{CSK} for ethanol}\footnote{We chose ethanol as it elicits, compared to ammonia, a larger response range for the individual sensors when choosing signal points equally spaced in $\xbar^{\mathrm{ethanol}} \in [15 \cdot 10^3, 50 \cdot 10^3]$ while fixing $\xbar^{\mathrm{ammonia}}=72 \cdot 10^3 \mathrm{ppm}$.}. This helps to identify the gain that can be achieved by allowing a second dimension for the mixture design. Finally, we also consider a \textbf{random mixture design}, where the individual symbols are simply drawn independently from $\mathcal{U}(\feasibleset)$. 

\begin{figure}
    \centering
    \includegraphics[width=\linewidth]{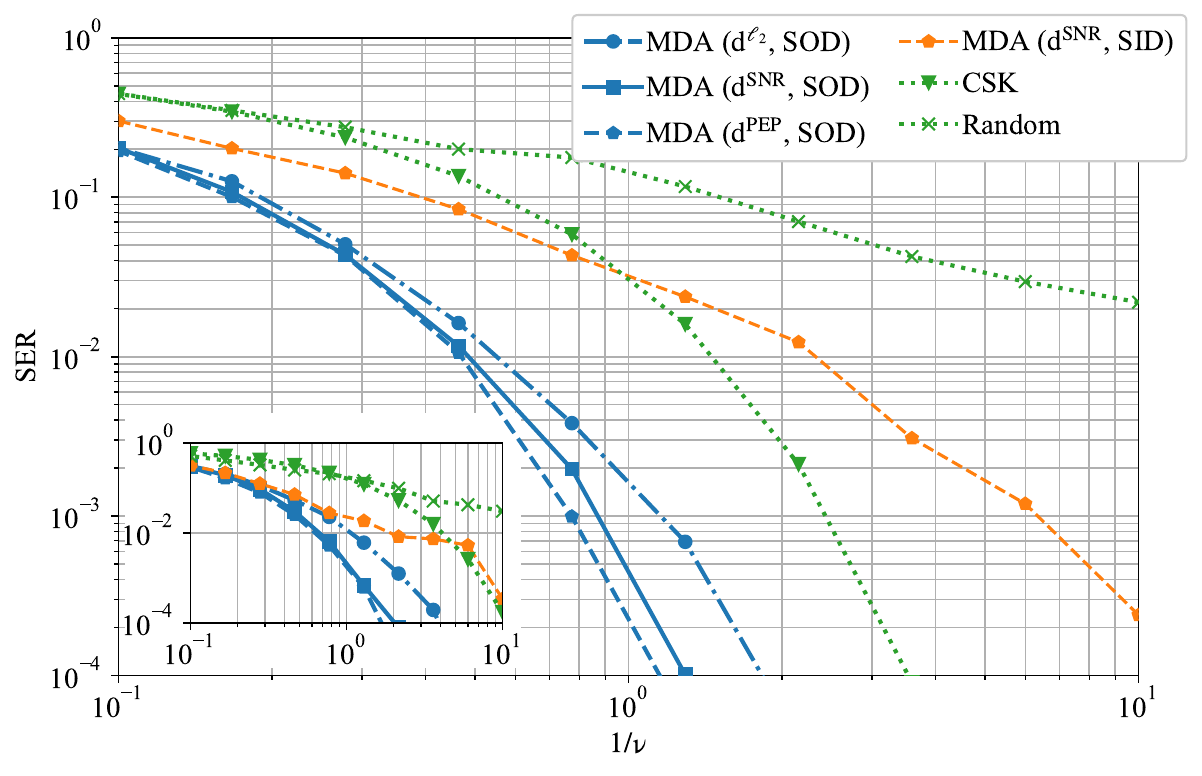}
    \vspace*{-10mm}
    \caption{\textbf{Mixture alphabet design performance analysis.}}
    \label{fig:eval:mixture:sin}
    \vspace*{-6mm}
\end{figure}
In Figure~\ref{fig:eval:mixture:sin}, we compare our \ac{MDA} with the three metrics introduced in Section~\ref{sec:mixture_design:metrics} to the aforementioned baselines for \ac{SIN}\footnote{We plotted the results for \ac{SDCN} as inset in Figure~\ref{fig:eval:mixture:sin} but focus on the \ac{SIN} case for the sake of brevity as we observe the same trends.}. As expected, our \ac{MDA} with $\dber(\cdot,\cdot)$ achieves the lowest \ac{SER} for a given value of $1/\nu$ while the random mixture design has the highest \ac{SER}. However, computing $\dber(\cdot,\cdot)$ requires the numerical computation of the approximate \acp{PEP} and thus incurs much higher computational cost compared to our approach using $\dvahid(\cdot,\cdot)$, which yields similar performance. Using $\deuclidean(\cdot,\cdot)$ has even lower computational complexity, but also a larger relative performance loss.
Regardless of the employed distance metric, our algorithm significantly outperforms all baselines, including the mixture design in the \ac{SID} (dashed, orange) which does not account for the non-linear, cross-reactive \ac{RX} behavior. 
In this specific setting, the alphabet design in the \ac{SID} is even outperformed by \ac{CSK}, whose signal elements are positioned so that $\f\cdot$ does not cause additional overlaps between the likelihoods in the \ac{SOD}, in contrast to the optimization in the \ac{SID} (see Fig.~\ref{fig:eval:optimization_domain_comparison}).

\scaleSubsection\subsection{Detector Analysis}\scaleSubsectionBelow\label{sec:evaluation:detector}
In the following, we focus on the performance of the \ac{AML} detector proposed in Section~\ref{sec:detection}. To this end, we compare, for given mixture alphabets, our detector to several baselines, namely the \textit{centroid detector}, \textit{\ac{kNN} classifiers}, and a \textit{histogram-based approach}, all of which we describe in the following in more detail.

The \textbf{centroid detector} is a \ac{UT}-based detector but with $\covmatapprox_{\z|\xbar}=\I$, $\forall\xbar$, i.e., only the mean - or centroid - of the likelihoods is considered. While this introduces some sub-optimality compared to the \ac{UT}-based detector which considers the spread of the individual symbols, it has an even lower computational complexity during the \textit{online phase} as only the computation of $\ell_2$ distances is necessary. Thus, the centroid detector may be suitable for scenarios with very strict hardware and power requirements at the \ac{RX}. 

Furthermore, we consider \textbf{\ac{kNN} classifiers}, which are a commonly used \textit{model-free} approach in machine learning, where new samples are classified according to the classes of the $k$ nearest samples in terms of the Euclidean distance. While this approach has the advantage of not requiring any model of the \ac{TX}, channel, and \ac{RX}, and instead uses real-world measurements (or simulations), its performance depends on the number of training samples whose generation might be costly if they are obtained from measurements. Furthermore, the complexity during the online phase also increases with the number of training samples, yielding a non-trivial accuracy-complexity trade-off. Thus, we consider \ac{kNN} classifiers trained with four samples per symbol, i.e., the same number of evaluations of $\f\cdot$ during the offline phase as the \ac{UT}-based detector, and with 100 samples per symbol, respectively.

Finally, we employ a \textbf{histogram-based \ac{AML} classifier}, where we simulate $10^6$ samples per symbol and compute the corresponding 2D histogram in the sensing unit output domain with a bin width of $10^{-6}$. Then, we classify samples during inference based on the histograms which approximate the likelihoods. While this approach can come arbitrarily close to the optimum by increasing the number of samples used to obtain the histograms, it also has the highest complexity during the offline phase. This becomes especially clear when comparing this approach with $10^6$ training samples per symbol to the \ac{UT}-based detector, which requires only 4 carefully chosen samples per symbol for $\nspecies=2$. 

\begin{figure}
    \vspace*{-4mm}
    \centering
    \includegraphics[width=\linewidth]{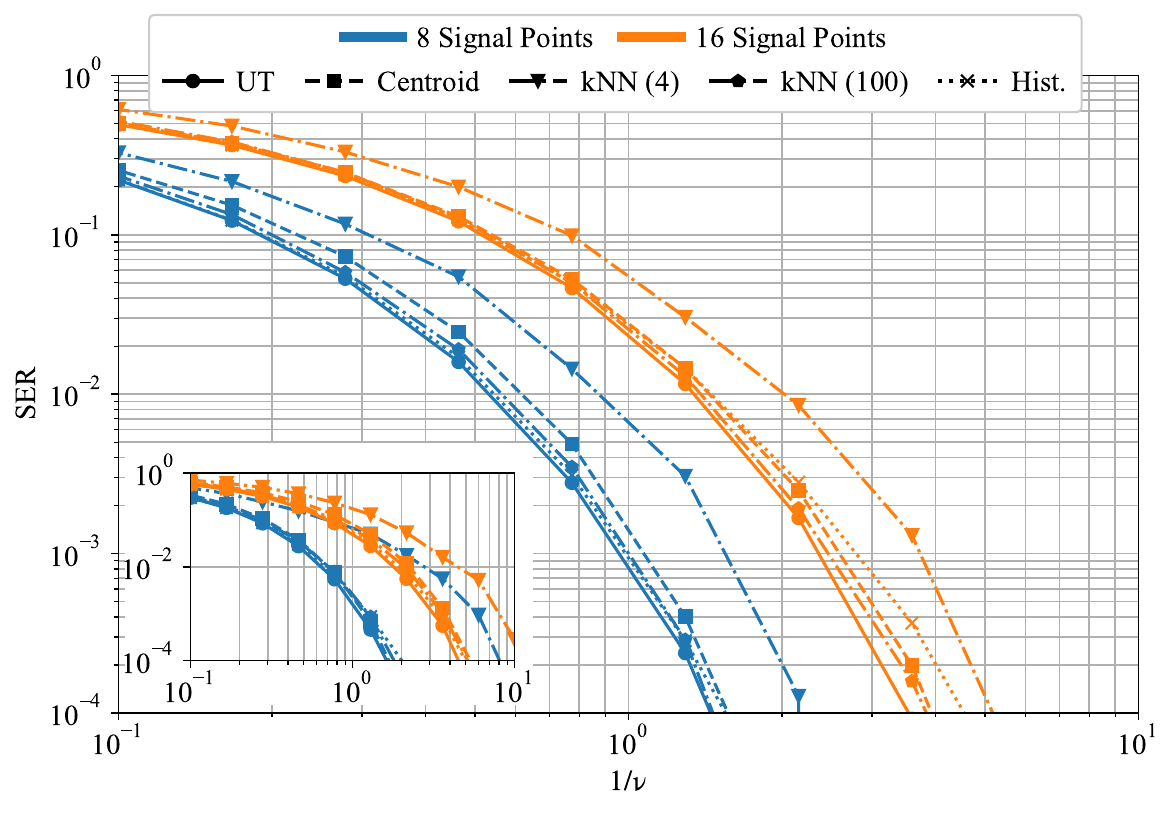}
    \vspace*{-9mm}
    \caption{\textbf{Detector performance analysis.}}
    \label{fig:eval:detector:sin_8_16}
    \vspace*{-6mm}
\end{figure}

Figure~\ref{fig:eval:detector:sin_8_16} shows the \ac{SER} achieved by the \ac{UT}-based \ac{AML} detector and the aforementioned baselines for 8 and 16 signal points for different values of $1/\nu$ for given mixture alphabets in the \ac{SIN} scenario\footnote{We plot the results for the \ac{SDCN} scenario as inset to Figure~2. As the same trends can be observed for both \ac{SIN} and \ac{SDCN}, we focus on the \ac{SIN} case for brevity.}. 
First, we observe that the \ac{SER} decreases for all shown detectors for increasing $1/\nu$. Our proposed \ac{AML} detector based on the \ac{UT} achieves for all $\nu$ the lowest \ac{SER} although the \ac{kNN} detector with 100 training samples per symbol and the histogram-based detector exhibit almost the same performance at the cost of relying on much more samples during the offline phase. The \ac{kNN} detector with only four training samples per symbol has significantly worse performance than all other schemes, demonstrating that a model-based approach like our \ac{UT}-based detector can achieve significantly higher \textit{sample efficiency} compared to purely data-driven approaches, i.e., a lower number of training samples can achieve a desired performance. As expected, the centroid detector has some performance loss compared to the \ac{UT}-based detector since it neglects the spread of the likelihoods.

In summary, Figure~\ref{fig:eval:detector:sin_8_16} illustrates that the \ac{UT}-based \ac{AML} detector achieves the best \ac{SER} performance compared to several baseline methods, which confirms its applicability in challenging settings with significant non-linear and cross-reactive \ac{RX} characteristics. At the same time, the \ac{UT}-based \ac{AML} detector requires, in comparison to the data-driven baseline methods, significantly fewer training samples for a given performance.

\scaleSection\section{Conclusion}\scaleSectionBelow\label{sec:conclusion}
In this work, we studied the detection and alphabet design for non-linear, cross-reactive \ac{RX} arrays for molecular mixture communication. We proposed an \ac{AML} detector operating on the outputs of the sensor array, as well as a mixture alphabet design algorithm that optimizes symbol separability by accounting for the \ac{RX} behavior. Our proposed detector performs similar to data-driven baseline schemes in terms of \ac{SER} while requiring significantly fewer training samples. Similarly, our mixture alphabet design algorithm improves upon schemes that ignore the non-linear, cross-reactive characteristics of \ac{RX} arrays.
These results highlight the importance of explicitly accounting for the \ac{RX} characteristics for realizing practical \ac{MC} systems. Interesting directions for further research are the experimental validation of the proposed schemes and extending them to handle the uncertainty about the parameters of the system model. 
}

\scaleSection\begin{acks}
\vspace*{-1mm}
This work was funded by the Deutsche Forschungsgemeinschaft (DFG, German Research Foundation) – GRK 2950 – Project-ID 509922606.
\end{acks}

\scaleSection
\bibliographystyle{ACM-Reference-Format}
\bibliography{sample-base}

\end{document}